\renewcommand{\arraystretch}{1.25}
\newtheorem*{lemma}{Lemma}
\begin{document}

\title{\textcolor{black}{Robust and Secure Computation Offloading and
Trajectory Optimization for Multi-UAV MEC Against Aerial Eavesdropper}}
\author{\IEEEauthorblockN{Can Cui, Ziye Jia, \IEEEmembership{Member, IEEE}, Jiahao You, Chao Dong, \IEEEmembership{Member, IEEE},\\
Qihui Wu, \IEEEmembership{Fellow, IEEE}, and Zhu Han, \IEEEmembership{Fellow, IEEE}}
\thanks{This work was supported by the National Key R{\&}D Program of China under Grant 2022YFB3104502, in part by National Natural Science Foundation of China under Grant 62301251, in part by the open research fund of National Mobile Communications Research Laboratory, Southeast University (No. 2024D04), in part by the Aeronautical Science Foundation of China 2023Z071052007, in part by the Young Elite Scientists Sponsorship Program by CAST 2023QNRC001, and partially supported by NSF ECCS-2302469, CMMI-2222810, Toyota. Amazon and Japan Science and Technology Agency (JST) Adopting Sustainable Partnerships for Innovative Research Ecosystem (ASPIRE) JPMJAP2326. (\textit{Corresponding author: Ziye Jia.})}
\thanks{Copyright (c) 2025 IEEE. Personal use of this material is permitted. However, permission to use this material for any other purposes must be obtained from the IEEE by sending a request to pubs-permissions@ieee.org.}
\thanks{Can Cui, Jiahao You, Chao Dong and Qihui Wu are with the College of Electronic and Information Engineering, Nanjing University of Aeronautics and Astronautics, Nanjing 211106, China (e-mail: cuican020619@nuaa.edu.cn; yjiahao@nuaa.edu.cn; dch@nuaa.edu.cn; wuqihui@nuaa.edu.cn).}
\thanks{Ziye Jia is with the College of Electronic and Information Engineering, Nanjing University of Aeronautics and Astronautics, Nanjing 211106, China, and also with the National Mobile Communications Research Laboratory, Southeast University, Nanjing, Jiangsu, 211111, China (e-mail: jiaziye@nuaa.edu.cn).}
\thanks{Zhu Han is with the University of Houston, Houston, TX 77004 USA, and also with the Department of Computer Science and Engineering, Kyung Hee University, Seoul 446-701, South Korea (e-mail: hanzhu22@gmail.com).}
}
\maketitle

\pagestyle{fancy}
\fancyhf{}
\fancyhead[R]{\fontsize{9}{11}\selectfont \thepage}

\begin{abstract}
	The unmanned aerial vehicle (UAV) based multi-access edge computing (MEC) appears as a popular paradigm to reduce task processing latency. However, the secure offloading is an important issue when occurring aerial eavesdropping. Besides, the potential uncertainties in practical applications and flexible trajectory optimizations of UAVs pose formidable challenges for realizing robust offloading. In this paper, we consider the aerial secure MEC network including ground users, service unmanned aerial vehicles (S-UAVs) integrated with edge servers, and malicious UAVs overhearing transmission links. To deal with the task computation complexities, which are characterized as uncertainties, a robust problem is formulated with chance constraints. The energy cost is minimized by optimizing the connections, trajectories of S-UAVs and offloading ratios. Then, the proposed non-linear problem is tackled via the distributionally robust optimization and conditional value-at-risk mechanism, which is further transformed into the second order cone programming forms. Moreover, we decouple the reformulated problem and design the successive convex approximation for S-UAV trajectories. The global algorithm is designed to solve the sub-problems in a block coordinate decent manner. Finally, extensive simulations and numerical analyses are conducted to verify the robustness of the proposed algorithms, with just 2\% more energy cost compared with the ideal circumstance.
\end{abstract}

\begin{IEEEkeywords}
	Unmanned aerial vehicles, multi-access edge computing, data offloading, secure communication, robust optimization.
\end{IEEEkeywords}

\section{Introduction}\label{sec1}

\lettrine[lines=2]{W}{ITH} the development of the sixth generation of communication system, there emerge explosive computationally-intensive and delay-sensitive applications, which bring a critical issue in guaranteeing stable communication services and achieving efficient network performance for their high requirements for quality of service (QoS). In response to this issue, the multi-access edge computing (MEC) provides new opportunities to reduce the responding latency and satisfy the QoS demands for heterogeneous tasks \cite{GDSG-Liang,Joint-Hoang,Computation-Yang}. However, due to the unaffordable cost and insufficient coverage, it is unrealistic to achieve comprehensive deployment of MEC infrastructures simply by static ground base stations (BSs). Fortunately, the unmanned aerial vehicles (UAVs) based MEC can provide promising services for alleviating congestions and enhancing performances \cite{Toward-Bai,Energy-Pervez,Cooperative-Jia,Joint-Dong}. As an innovative candidate, the UAVs produce a marked effect in data collection, transmission and edge computing \cite{Adaptive-Wu,Computing-Wang}. By deploying edge servers to the user sides, the multiple UAVs assisted MEC technology can significantly compensate the shortcomings of ground infrastructures with a lower cost and better coverage.

Nevertheless, regardless of the advantages and superiorities of such a multi-UAV enabled MEC paradigm, there still exist many challenges related with reliability and security to be focused on, detailed as follows.
\begin{itemize}
	\item{\em Challenge 1:}  Due to the openness of the airspace, there exist potential eavesdroppers that may significantly threaten the security of the aerial MEC paradigm \cite{Learning-Based-Liu, Joint-Li}. Hence, it is significant to highlight the secure communication, and how to  inhibit the overhearing behaviors from malicious eavesdroppers becomes a tricky issue to ensure the secrecy performance \cite{Joint-Xu,Energy-Hua}.
	\item{\em Challenge 2:}  The dynamic and heterogeneous demands of tasks are time-varying, and it faces great uncertainties in practical applications since the task complexity is not always known. However, the traditional methods fail to take the uncertainties into account, resulting in unexpected delays and failures in QoS demands. Consequently, a robust algorithm is necessary for the worst scenarios to guarantee the performance of MEC services \cite{Robust-Fan}.
	\item{\em Challenge 3:} The energy cost is a key metric in the UAV-assisted MEC network due to the constrained battery capacities and limited service time \cite{Aiding-Yang}. However, it faces great time complexity to elaborately design the offloading strategies and the trajectories for the sake of energy saving with high dimensions, especially via learning based methods \cite{Frontiers-Yang}.
\end{itemize}

To deal with the above challenges, we creatively investigate a multi-UAV based secure MEC scenario which can be applied in smart cities and provide temporary communication and computation services in remote areas. In particular, multiple service unmanned aerial vehicles (S-UAVs) equipped with computing resources navigate in the area to serve as aerial BSs. A ground jammer (GJ) broadcasts jamming signals continuously to inhibit the eavesdropping unmanned aerial vehicle (E-UAV) which overhears the communication links. The uncertain computation complexities of tasks are taken into account, and the problem is formulated with regard to chance constraints, which are accordingly transformed into distributionally robust chance constraints (DRCCs) under the worst case. Moreover, we utilize the conditional value-at-risk (CVaR) mechanism to deal with the DRCCs, and the mixed integer non-linear programming (MINLP) problem is further decoupled into three sub-problems. To take advantages of the mobility of UAVs, the successive convex approximation (SCA) based method is exploited for trajectory optimization, and the problem is figured out via alternatively solving the three sub-problems by convex optimization toolkits. Furthermore, the numerical results show great superiorities and robustness of the designed algorithms.

In summary, the main contributions are concluded as follows.

\begin{itemize}
	\item An innovative framework considering a potential malicious eavesdropper is proposed. To prevent the overhearing from E-UAV and ensure the security of communication links, a GJ is deployed to broadcast interference signals. We aim to minimize the total energy cost via jointly optimizing the trajectories of S-UAVs, GU-UAV connection relationships, and partial offloading ratios. 
	\item We consider the potential uncertainties in task computation complexities, and the chance constraints are constructed under the uncertainty set based on their first and second order moment estimation information. To tackle the chance constraints, we firstly leverage the distributionally robust optimization (DRO) method and convert them into DRCCs. Furthermore, via the theory of CVaR, the DRCCs are transformed into mixed integer second order cone programming (MISOCP) forms.
	\item The reformulated MINLP problem is decomposed into three sub-problems by the primal decomposition. The sub-problems related with the offloading strategies and GU-UAV connection relationships are solved using optimization tools of CVX and MOSEK, respectively. Since the sub-problem concerning S-UAV trajectories is intractable, we further design the SCA method.
	\item To evaluate the proposed algorithms, numerical simulations are conducted under diverse circumstances. The robustness of the proposed method is verified against the potential uncertainties, and the superiority in optimization performance of the proposed algorithms is revealed compared with other baseline scheme.
\end{itemize}

The rest of this paper is arranged as follows. Related works are presented in Section \ref{sec2}. Section \ref{sec3} proposes the system model and problem formulation. The mathematical reformulation and decompositions, as well as algorithms are completed in Section \ref{sec4}. Simulations results and corresponding analyses are provided in Section \ref{sec5}. Finally, we draw conclusions in Section \ref{sec6}.

\section{Related Works}\label{sec2}
Boosting for their flexibility and adaptability, UAVs as aerial BSs have attracted tremendous attentions from academics. There has been a surge in works that deploy UAVs to provide MEC services for GUs. \cite{Stackelberg-Wang} studied an offloading problem in a multi-UAV based MEC system and designed a multi-round iterative game algorithm to enhance the user satisfaction. \cite{Multiagent-Lee} aimed to minimize the time averaged total system energy consumption in the UAV-based MEC network while maintaining stability using reinforcement learning. \cite{Service-Zheng} explored an iterative algorithm based on the block coordinate descent and SCA methods to minimize the task completion latency for GUs, in which multi-UAVs equipped with MEC servers acted as aerial BSs. The study in \cite{Cooperative-Liu} presented an MEC framework, in which UAVs cooperatively provided offloading services for GUs, and a deep reinforcement learning based algorithm was proposed for the optimal computation offloading and resource management policies. \cite{Joint-Liu} investigated a task computation data volume maximizing problem in the UAV enabled MEC system via the block coordinate descent method to jointly optimize the UAV trajectory, communication and computation resources allocation. \cite{Fairness-He} formulated a fairness based offloading problem in the multi-UAV enabled MEC network for the offloading strategy, selectivity, and UAV trajectories design. However, the unpredictable uncertainties are ignored in the above researches, which are stemmed from the limitations of practical applications and can lead to distinguished outage. A smaller subset of studies attempts to handle uncertainties. \cite{Adaptive-Chen} focused on the adaptive bitrate video streaming with uncertain content popularity in UAV assisted MEC networks and developed a DRO algorithm for risk-averse solutions. Moreover, the double uncertainties for the channel state information and task complexity in the multi-UAV assisted MEC scenario were taken into consideration in \cite{Robust-Li}, and a multi-agent proximal policy optimization was designed to jointly optimize the UAV trajectory, task partition, as well as resource allocation. While the uncertainties are addressed in these works, they fail to integrate robustness with secure communication for the multi-UAV enabled MEC framework, which is exposed in the open airspace and vulnerable to the malicious eavesdropping.

Owing to the existence of potential eavesdroppers which poses a great menace, the security of communication between GUs and UAVs is a critical factor in such an MEC system. To provide secure offloading services for GUs, a couple of works have been investigated. For instance, \cite{Resource-Lu} jointly optimized the transmission power, time slot allocation factor, and UAV trajectory by integrating block coordinate descent and SCA approaches for the proposed secure data transmission scheme in the UAV assisted maritime MEC system with a flying eavesdropper. The authors in \cite{DDQN-Ding} studied the secure transmission in UAV based MEC systems to maximize the average secure computing capacity and a double-deep Q-learning method was proposed. \cite{Task-Zhang} proposed a joint optimization problem for offloading decision, resource allocation, and trajectory planning in the multi-UAV assisted MEC system to maximize the secure calculation capacity of GUs. \cite{Collaborative-Ding} considered a UAV-enabled MEC secure system against aerial eavesdropping to minimize the energy consumption. \cite{Secure-Zhao} formulated a joint optimization problem for secure offloading with an aerial eavesdropper in a multi-UAV based MEC network, which was high-dimensional nonlinear mixed-integer NP-hard and solved via the deep reinforcement learning based approach. \cite{Secure-Li} designed a multi-agent reinforcement learning based method for secure offloading and resource allocation in the multiple UAVs assisted MEC system with eavesdroppers. The authors in \cite{Online-Ding} considered the information security against eavesdroppers, and formulated the problem to maximize the secure computation efficiency by optimizing the offloading decision and resource management based on the deep reinforcement learning and SCA methods. While these available literatures focus on mitigating eavesdropping threats in the UAV assisted MEC, the task models are simplified with deterministic complexities and the robustness in the face of unpredictable task demands cannot be guaranteed.

In view of prior works, the available researches have been studied towards the efficient offloading and secure communication in the multiple UAVs enabled MEC networks. However, in these studies, the unpredictable computation complexities are mostly ignored, which may pose great challenges for efficient offloading. It leaves a critical gap in the scenarios where both the security and adaptability to dynamic tasks are required. Therefore, in the secure communication scenario with the potential aerial eavesdropper, we capture the computation uncertainties for a robust solution to against the above limitations. By elaborating on the optimization of offloading strategies and UAV trajectories, the flexibility of UAVs are taken full considerations and the high QoS can be guaranteed.

\begin{table}[tbp]
	\renewcommand\arraystretch{1.55}
	\caption{KEY NOTATIONS}\label{notation}
	\begin{center}
	\begin{tabular}{|c|l|}
	\hline
	Parameter & Description \\
	\hline
	$L_i(t)$  & Data length of task $i$ in time slot $t$.\\
	\hline
	$c_i(t)$ &  Computation complexity of task $i$ in time slot $t$.\\
	\hline
	$\Delta_i(t)$ & Random task computation complexity estimation error.\\
	\hline
	$\mathcal{P}$ & Uncertainty set for $\Delta_i(t)$.\\
	\hline
	$E_m^{fly}(t)$ & Propulsion energy cost for S-UAV $m$ in time slot $t$.\\
	\hline
	$E_{i}^{l,c}(t)$ & \makecell[l]{Computation energy cost for GU $i$ to process task locally \\in time slot $t$.}\\
	\hline
	$E_{i,m}^{e,d}(t)$ & \makecell[l]{Transmission energy cost from GU $i$ to S-UAV $m$ \\in time slot $t$.}\\
	\hline
	$E_{i,m}^{e,c}(t)$ & \makecell[l]{Computation energy cost for S-UAV $m$ to process task $i$ \\in time slot $t$.}\\
    \hline
	$T_{i}^{l,c}(t)$ & \makecell[l]{Computation latency for GU $i$ to process task locally \\in time slot $t$.}\\
	\hline
    $T_{i,m}^{e,d}(t)$ & \makecell[l]{Transmission latency from GU $i$ to S-UAV $m$ \\in time slot $t$.}\\
	\hline
	$T_{i,m}^{e,c}(t)$ & \makecell[l]{Computation latency for S-UAV $m$ to process task $i$ \\in time slot $t$.}\\
	\hline
	\hline
	Variable & Description \\
	\hline
	$w_m(t)$ & Location of S-UAV $m$ in time slot $t$.\\
	\hline
	$\lambda_{i,m}(t)$ & Connection between GU $i$ and S-UAV $m$ in time slot $t$.\\
	\hline
	$\rho_i(t)$ & Offloading ratio from GU $i$ in time slot $t$.\\
	\hline
	\end{tabular}
	\end{center}
\end{table}

\section{System Model}\label{sec3}

In this section, we propose a multi-UAV based secure MEC scenario in Section \ref{subsection-1}. The task model characterized with random computation complexities is provided in Section \ref{subsection-2}. To take full advantages of the flexibility of UAVs, we analyze their mobility models in Section \ref{subsection-3} to navigate in the area. The secure communication model and computation model are illustrated in Sections \ref{subsection-4} and \ref{subsection-5}, respectively. The problem formulation to minimize the energy cost concerning chance constraints is put forward in Section \ref{subsection-6}. Besides, for clarity, the main notations are shown in Table \ref{notation}.

\subsection{Multi-UAVs based MEC Scenario}\label{subsection-1}

We focus on a multiple S-UAVs assisted MEC network with an E-UAV, as shown in Fig. \ref{Fig-1} to offer offloading services in infrastructure-limited remote areas or rescue missions. Specifically, $M$ S-UAVs are equipped with computing resources to provide offloading services for $I$ GUs with heterogeneous tasks. Let $\mathcal{I}=\{1,2,\cdots,i,\cdots,I\}$, $i\in\mathcal{I}$ denote the set of GUs, and $\mathcal{M}=\{1,2,\cdots,m,\cdots,M\}$, $m\in\mathcal{M}$ denote the set of S-UAVs. Due to the openness of airspace, an E-UAV can overhear the communication link and intercept the offloading signals out of malice in practical applications. To guarantee the security of the network, a GJ interferes with the E-UAV by broadcasting jamming signals. Since the S-UAVs have the prior knowledge of the interference signals from the GJ, they are immune to the jamming signals. However, the existence of GJ is not exposed to the E-UAV in advance. Thus, the received signals from both GUs and the GJ for the E-UAV are all treated as valid signals \cite{Online-Ding}. In this way, the overhearing behavior from the E-UAV can be inhibited by lowering its eavesdropping rate, thereby enhancing the secrecy transmission rate. To capture the dynamically time-varying characteristics for both UAVs and GUs, the total process is divided into $T$ time slots, where the index set for time slots is defined as $t\in\mathcal{T}=\{1,2,\cdots,T\}$. The time duration of the time slot is denoted by $\tau$.

\begin{figure}
	\centering{\includegraphics[width=0.98\linewidth]{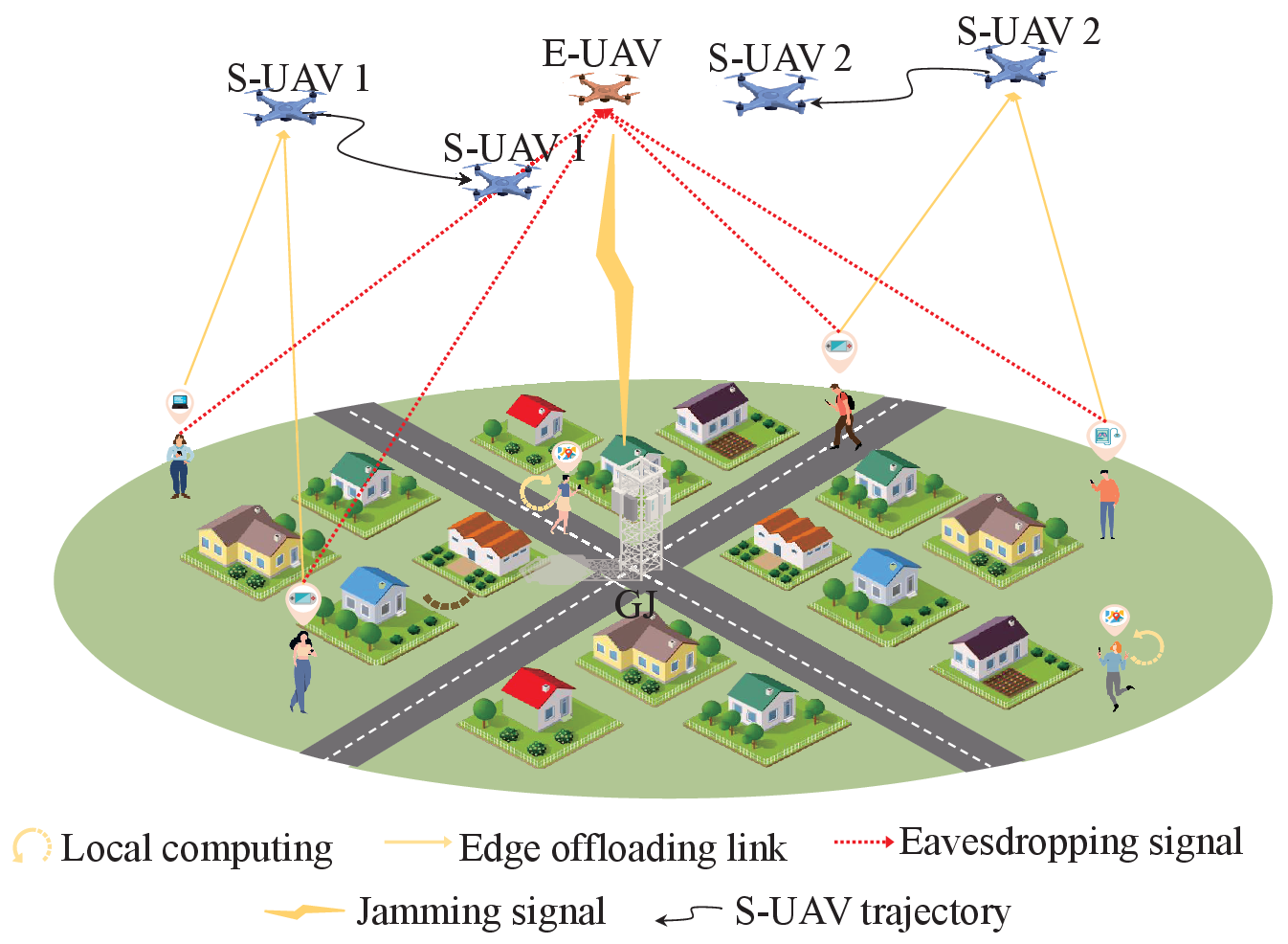}}
	\caption{Multi-UAV based secure MEC scenario.}
	\label{Fig-1}
\end{figure}

\subsection{Task Model with Random Complexity}\label{subsection-2}
Each GU has a computation task to be processed in time slot $t$, denoted as $(L_i(t),c_i(t))$, where $L_i(t)$ represents the data length and $c_i(t)$ represents the computation complexity in unit of CPU cycles for processing 1 bit data. However, due to the limitation of practical systems, the complexities of tasks are uncertain which vary with the types of input data \cite{Device-Masoudi, Asynchronous-Liang}. We evaluate the computation complexities, by $c_i(t)$ as
\begin{equation}
	c_i(t)=\bar{c_i}(t)+\Delta_i(t),\forall i\in \mathcal{I}, t\in \mathcal{T}.
\end{equation}
Wherein, $\bar{c_i}(t)$ represents the estimated computation complexity of task $i$ at time slot $t$, and $\Delta_i(t)$ indicates the corresponding estimation error, which is assumed to follow an unknown distribution $\mathbb{P}$. It is formidable to measure the accurate value or possible distribution of such potential uncertainties \cite{Robust-Fan}. Instead, the first and second-order moment value can be easily obtained via historical statical information, denoted as $\mu_i(t)$ and $\sigma_i^2(t)$, respectively. Accordingly, we have
\begin{equation}
	\mathbb{E_P}(\Delta_i(t))=\mu_i(t),\forall i\in \mathcal{I}, t\in \mathcal{T},
\end{equation}
and
\begin{equation}
	\mathbb{D_P}(\Delta_i(t))=\sigma_i^2(t),\forall i\in \mathcal{I}, t\in \mathcal{T}.
\end{equation}
Wherein, $\mathbb{E_P}(\cdot)$ represents the mean value and $\mathbb{D_P}(\cdot)$ denotes the variance. Therefore, the uncertainty set $\mathcal{P}$ of the random parameter $\Delta_i(t)$ is constructed to describe all of their possible distributions of $\mathbb{P}$, where $\mathbb{P}\in\mathcal{P}$.

Since each GU selects at most one S-UAV for offloading assistance during each time slot, let binary variable $\lambda_{i,m}(t)\in\{0,1\}$ represent the index for connection relationship between GU $i$ and S-UAV $m$ at time slot $t$ \cite{Computational-Khalid}. If GU $i$ is connected to S-UAV $m$ at time slot $t$, $\lambda_{i,m}(t)=1$, and otherwise, $\lambda_{i,m}(t)=0$. Considering the accessing constraint, each GU can offload its data to at most one S-UAV during each time slot, i.e.,
\begin{equation}\label{c1}
	\sum_{m=1}^M\lambda_{i,m}(t)\leq 1, \forall i\in \mathcal{I},t\in\mathcal{T}.
\end{equation}

Moreover, due to the limitation of antennas as well as MEC servers, each S-UAV is assumed to connect to at most $M_{max}$ GUs for task processing simultaneously during each time slot, i.e.,
\begin{equation}\label{c2}
	\sum_{i=1}^I\lambda_{i,m}(t)\leq M_{max}, \forall m\in \mathcal{M},t\in\mathcal{T}.
\end{equation}
\subsection{Mobility Model of UAV}\label{subsection-3}
The Cartesian coordinate system is applied to describe the locations of both GUs and UAVs. The position of GU $i$ is represented as $w_i=(x_i,y_i),\forall i\in \mathcal{I}$ and the location of the GJ is $w_{j}=(x_{j},y_{j})$. For the sake of energy conservation, we assume each UAV including both the E-UAV and S-UAVs remains flying at the same height to avoid frequent ascending and descending. The horizontal coordinate of S-UAV $m$ at time slot $t$ is given by $w_m(t)=(x_m(t),y_m(t)),\forall m\in \mathcal{M}$ and the flying height of S-UAVs is $h_{s}$. Then, the Euclidean distance between S-UAV $m$ and GU $i$ at time slot $t$ is obtained by
\begin{equation}
	\begin{split}
		d_{i,m}(t)=\sqrt{(x_m(t)-x_i)^2+(y_m(t)-y_i)^2+h^2_{s}},\\
		\forall i\in \mathcal{I},m\in \mathcal{M},t\in \mathcal{T}.
	\end{split}
\end{equation}

Moreover, the flying trajectory of the E-UAV is known, whose position at time slot $t$ is represented as $w_{e}(t)=(x_{e}(t),y_{e}(t))$. $h_e$ denotes the height of the E-UAV. The distance between GU $i$ and the E-UAV at time slot $t$ is calculated as
\begin{equation}
	\begin{split}
		d_{i,e}(t)=\sqrt{(x_{e}(t)-x_i)^2+(y_{e}(t)-y_i)^2+h^2_e},\\
		\forall i\in \mathcal{I},t\in \mathcal{T},
	\end{split}
\end{equation}
and the distance between the E-UAV and GJ at time slot $t$ is
\begin{equation}
	d_{e,j}(t)=\sqrt{(x_{e}(t)-x_{j})^2+(y_{e}(t)-y_{j})^2+h^{2}_e},
	\forall t\in \mathcal{T}.
\end{equation}

We denote $X_{min}$ and $X_{max}$ as the horizontal bounds of the specific area and $Y_{min}$ and $Y_{max}$ as the vertical bounds. The inherent constraints for the positions of S-UAVs are
\begin{equation}\label{c3}
	X_{min}\leq x_m(t) \leq X_{max},\forall m\in \mathcal{M},t\in \mathcal{T},
\end{equation}
and
\begin{equation}\label{c4}
	Y_{min}\leq y_m(t) \leq Y_{max},\forall m\in \mathcal{M},t\in \mathcal{T}.
\end{equation}
Moreover, the S-UAVs have fixed starting and ending points, denoted as
\begin{equation}\label{c5}
	w_{m}(1)= w_{m}^{ini},\forall m\in\mathcal{M},
\end{equation}
and
\begin{equation}\label{c6}
	w_{m}(T)= w_{m}^{fin},\forall m\in\mathcal{M},
\end{equation}
where $w_{m}^{ini}$ and $w_{m}^{fin}$ are the initial and final locations for S-UAV $m$, respectively.

Since the tasks generated from GUs are heterogeneous and time-varying, it is necessary to take advantages of the mobility of S-UAVs and provide better services for GUs. During each slot, S-UAVs are assumed to navigate in the area with a fixed velocity $v_0$. Referring to lecture \cite{Energy-Zeng}, we employ the rotary wing UAV propulsion power model. The acceleration time of S-UAVs is ignorable \cite{Dynamic-Xu}, and the traveling time of S-UAV $m$ during time slot $t$ is calculated as
\begin{equation}
	T_m^{fly}(t)=\frac{||w_m(t)-w_m(t-1)||}{v_0},\forall m\in\mathcal{M},t\in\mathcal{T},
\end{equation}
where
\begin{equation}\label{c7}
	||w_m(t)-w_m(t-1)||_\leq v_0 \tau, \forall m\in\mathcal{M}, t\in\mathcal{T}.
\end{equation}
Moreover, the power consumption for S-UAV $m$ in its flight process with the speed of $v_0$ consists of blade profile, induced and parasite \cite{Energy-Zeng}, i.e.,
\begin{equation}
	\begin{split}
		p_m^{fly}=&\underbrace{ P_1 \left(1+\frac{3v_0^2}{v_{bla}^2}\right)}_{\text{blade profile}}
				+\underbrace{ P_2 \left(\sqrt{1+\frac{v_0^4}{4v_{rot}^4}}-\frac{v_0^2}{2v_{rot}^2}\right)^\frac{1}{2} }_{\text{induced}}\\
				+&\underbrace{ \frac{1}{2} g \rho_0 s_0 a_0 v_0^3 }_{\text{parasite}},\qquad\forall m\in\mathcal{M},
	\end{split}
\end{equation}
where $P_1$ and $P_2$ denote the blade profile power and induced power under hovering status, respectively. $v_{bla}$ represents the velocity of the rotor blade. $v_{rot}$ is the mean rotor speed. $g$ is the drag ratio. $\rho_0$ and $s_0$ are deemed as the air density and the rotor solidity, respectively. $a_0$ denotes the rotor disc area. Specifically, the power consumption for hovering is obtained by substituting $v_0=0$ into $p_m^{fly}$, i.e.,
\begin{equation}
	\begin{split}
		p_m^{hov}=P_1+P_2,\forall m\in\mathcal{M}.
	\end{split}
\end{equation}
During each time slot, the propulsion energy consumption for S-UAV $m$ consists of the flying energy cost and hovering energy cost \cite{Dynamic-Xu}, calculated as 
\begin{equation}
	\begin{split}
			E_m^{fly}(t)=\underbrace{p_m^{fly} T_m^{fly}(t)}_{\text{flying}}+\underbrace{p_m^{hov} (\tau-T_m^{fly}(t))}_{\text{hovering}}&,\\
			\forall m\in\mathcal{M}, & t\in\mathcal{T}.
	\end{split}
\end{equation}

\subsection{Communication Model}\label{subsection-4}
Owing to the limited computing resources, it may be unaffordable for GUs to complete their tasks locally within the desired time period. In this case, we assume the computing tasks generated from GUs are divisible, and GUs are allowed to offload their tasks to S-UAVs for further process. In other words, the partial offloading model is followed, in which S-UAVs with edge servers are enabled to process the offloaded data in parallel, while the remaining data is processed locally at the GU sides \cite{Partial-Niu}. Let continuous variable $\rho_i(t)\in [0,1]$ represent the partial data generated from GU $i$ at time slot $t$ and offloaded to the edge servers, which satisfies
\begin{equation}\label{c8}
	\rho_i(t)\leq\sum_{m=1}^M \lambda_{i,m}(t),\forall i\in \mathcal{I}, t\in \mathcal{T}.
\end{equation}

We consider the communication model between GUs and S-UAVs as line of sight channel (LoS) model, which is modeled as free space path loss \cite{Service-Jia}. The channel gain between GU $i$ and S-UAV $m$ at time slot $t$ is calculated as
\begin{equation}
	g^{up}_{i,m}(t)=g_0 d^{-2}_{i,m}(t),\forall i\in \mathcal{I}, m\in \mathcal{M}, t\in \mathcal{T},
\end{equation}
where $g_0$ denotes the path loss at the reference distance $d_0=1$m. Moreover, for better interference management, the orthogonal frequency division multiple access (OFDMA) technology is employed for data offloading \cite{Enhancing-Chen}. The signal-to-interference-and-noise-ratio (SINR) is given by
\begin{equation}
	\gamma_{i,m}^{up}(t) = \frac{p_0g^{up}_{i,m}(t)}{n_0B_0}, \forall i\in \mathcal{I}, m\in \mathcal{M}, t\in \mathcal{T},
\end{equation}
where $B_0$ represents the bandwidth of the communication channel, and $p_0$ is the transmission power of GUs. $n_0$ is the power density of the channel additive noise.

As for the E-UAV, the channel gain between GU $i$ and E-UAV is
\begin{equation}
	g^{eav}_{i,e}(t)=g_0 d^{-2}_{i,e}(t),\forall i\in \mathcal{I}, t\in \mathcal{T},
\end{equation}
and the channel gain between the E-UAV and GJ is
\begin{equation}
	g^{jam}_{e,j}(t)=g_0 d^{-2}_{e,j}(t),\forall t\in \mathcal{T}.
\end{equation}
The deployed GJ continuously broadcasts the interference jamming signals to resist the overhearing, which become disruptive to the E-UAV and have an negative impact on its eavesdropping rate. Since it is difficult to distinguish whether the signals are from the GU or GJ \cite{Online-Ding}, the SINR for the E-UAV is
\begin{equation}
	\gamma_{i,e}^{eav}(t) = \frac{p_0g^{eav}_{i,e}(t)}{p^j g^{jam}_{e,j}(t)+n_0B_0},\forall i\in \mathcal{I}, t\in \mathcal{T}.
\end{equation}
Wherein, $p^j$ is the interference signal power from the GJ. In light of the Shannon formula, the uplink transmission rate from GU $i$ to S-UAV $m$ is given as
\begin{equation}
	r_{i,m}^{up}(t)=B_0\log_2(1+\gamma^{up}_{i,m}(t)),\forall i\in \mathcal{I}, m\in \mathcal{M}, t\in \mathcal{T}.
\end{equation}
Moreover, the achievable rate from GU $i$ to the E-UAV is
\begin{equation}
	r_{i,e}^{eav}=B_0\log_2(1+\gamma_{i,e}^{eav}(t)),\forall i\in \mathcal{I}, t\in \mathcal{T}.
\end{equation}
Consequently, the secure offloading rate \cite{Secure-Zhou} with the presence of the E-UAV is 
\begin{equation}
	r_{i,m}^{sec}(t)=\left[r_{i,m}^{up}(t)-r_{i,e}^{eav}\right]^+,\forall i\in \mathcal{I}, m\in \mathcal{M}, t\in \mathcal{T},
\end{equation}
where $[\cdot]^+\triangleq \max\{\cdot,0\}$. The transmission delay for GU $i$ to S-UAV $m$ at time slot $t$ is
\begin{equation}\label{e3}
	T^{e,d}_{i,m}(t)=\frac{\lambda_{i,m}(t)\rho_i(t)L_{i}(t)}{r_{i,m}^{sec}(t)},\forall i\in \mathcal{I}, m\in \mathcal{M}, t\in \mathcal{T},
\end{equation}
and the transmission energy cost is calculated as
\begin{equation}
	E^{e,d}_{i,m}(t)=p_0T^{e,d}_{i,m}(t),\forall i\in \mathcal{I}, m\in \mathcal{M}, t\in \mathcal{T}.
\end{equation}
\subsection{Computation Model}\label{subsection-5}
Integrated with the MEC technology, the tasks can be partially offloaded and the computation model consists of local computing performed by GUs themselves and edge offloading accomplished on S-UAVs. The detailed computation models are analyzed as follows.

\subsubsection{Local Computing}

We define the variable $f_g$ to represent the CPU frequency of GUs and $\varepsilon_g$ to represent the effective switched capacitance related to the architecture of MEC servers of GUs. For the local computing part of task $i$, the time latency task processing is derived as
\begin{equation}\label{e1}
	T^{l,c}_i(t)=\frac{(1-\rho_i(t))c_i(t)L_i(t)}{f_g},\forall i\in \mathcal{I}, t\in \mathcal{T},
\end{equation}
and the expected energy consumption for GU $i$ is
\begin{equation}
	\begin{split}
		&E^{l,c}_i(t)=\mathbb{E}\left\{\varepsilon_g(1-\rho_i(t))c_i(t)L_i(t)f^2_g\right\}\\
		&=\varepsilon_g(1-\rho_i(t))(\bar{c}_i(t)+\mu_i(t))L_i(t)f^2_g,\forall i\in \mathcal{I}, t\in \mathcal{T}.
	\end{split}
\end{equation}
Due to the uncertain computation complexity, we formulate a chance constraint for the latency, in which the computation process for each task must be completed within the time slot to avoid outdated data. In other words, the latency for processing task $i$ in time slot $t$ should not be larger than the duration $\tau$ with a probability of $\alpha$. For the local computing mode, we obtain
\begin{equation}\label{c9}
	\mathbf{Pr}_{\mathbb{P}}\left\{T^{l,c}_{i}(t)\leq \tau \right\}\geq\alpha,\forall i\in \mathcal{I},t \in \mathcal{T},
\end{equation}
where $\mathbf{Pr}_{\mathbb{P}}$ indicates the chance constraint is formulated under the unknown distribution $\mathbb{P}$ for uncertain computation complexity estimation error $\Delta_i(t)$.

\subsubsection{UAV-based Edge Offloading}As for tasks from GU $i$ which are partially offloaded to S-UAV $m$, the computation latency is given by
\begin{equation}\label{e2}
	T^{e,c}_{i,m}(t)=\frac{\lambda_{i,m}(t)\rho_i(t)c_i(t)L_i(t)}{f_u},\forall i\in \mathcal{I}, m\in \mathcal{M}, t\in \mathcal{T},
\end{equation}
where $f_u$ is the CPU frequency of UAVs \cite{Joint-You}. Moreover, let $\varepsilon_u$ denote the energy cost coefficient of S-UAVs. The expected computing energy consumption for S-UAV $m$ to process the task $i$ at time slot $t$ is calculated as
\begin{equation}
	\begin{split}
		E^{e,c}_{i,m}(t)=&\mathbb{E}\left\{\lambda_{i,m}(t)\varepsilon_u\rho_i(t)c_i(t)L_i(t)f^2_u\right\}\\
		=&\lambda_{i,m}(t)\varepsilon_u\rho_i(t)(\bar{c}_i(t)+\mu_i(t))L_i(t)f^2_u,\\
		&\qquad\qquad\qquad\forall i\in \mathcal{I}, m\in \mathcal{M}, t\in \mathcal{T}.
	\end{split}
\end{equation}
In this mode, the total latency for processing the offloading part to the S-UAV of a task includes both transmission and computation latency. Hence, when $\lambda_{i,m}(t)=1$, the chance constraint for latency in UAV-based offloading mode is
\begin{equation}\label{c10}
		\mathbf{Pr}_{\mathbb{P}}\left\{ T^{e,c}_{i,m}(t)+T^{e,d}_{i,m}(t)\leq \tau \right\}\geq\alpha,\forall  i\in \mathcal{I}, m\in \mathcal{M},t \in \mathcal{T}.
\end{equation}

The sum weighted energy consumption including S-UAVs and GUs at time slot $t$ consists of the transmission energy consumption for GUs, the computation energy consumption for GUs in the local computing model and for S-UAVs in the edge offloading model, and the propulsion energy consumption for the flight of S-UAVs, which can be denoted by
\begin{equation}
	\begin{split}
		E^{total}(t)=&\sum_{i=1}^I E^{l,c}_i(t)+\sum_{i=1}^I\sum_{m=1}^M E^{e,d}_{i,m}(t)\\
		+&\kappa\sum_{m=1}^M E^{fly}_m(t)+\kappa\sum_{i=1}^I\sum_{m=1}^M E^{e,c}_{i,m}(t),\forall t\in \mathcal{T},
	\end{split}
\end{equation}
where $\kappa$ is the weighted variable for a tradeoff among energy consumptions for S-UAVs and GUs.

\subsection{Problem Formulation}\label{subsection-6}

In this work, we emphasize the limited energy for both S-UAVs and GUs and our purpose is to minimize the total energy cost by optimizing the trajectories of S-UAVs $\bm{w_s}=\left\{w_m(t)|\forall m,\forall t\right\}$, connection strategy $\bm{\lambda}=\left\{\lambda_{i,m}(t)|\forall i,\forall m,\forall t\right\}$ and offloading ratio $\bm{\rho}=\left\{\rho_i(t)|\forall i,\forall t\right\}$. The problem $\textbf{P0}$ is mathematically formulated as
\begin{subequations}
	\begin{align}
		\textbf{P0:}\quad &\min_{\bm{w_s},\bm{\lambda},\bm{\rho}} \sum_{t=1}^T E^{total}(t)\nonumber\\
		\textrm{s.t.} \quad &(\ref{c1}),(\ref{c2}),(\ref{c3})-(\ref{c6}),(\ref{c7}),(\ref{c8}),(\ref{c9}),(\ref{c10}),\nonumber\\
		\quad &\rho_i(t) \in \left[ 0,1 \right],\forall i\in \mathcal{I},t \in \mathcal{T}\label{c11},\\
		\quad &\lambda_{i,m}(t)\in\{0,1\},\forall i\in \mathcal{I}, m\in \mathcal{M}, t \in \mathcal{T}\label{c12},
	\end{align}
\end{subequations}
which is obviously a non-convex problem subject to non-linear constraints. Besides, the inherent relationships among the coupled variables induce tremendous obstacles in solving MINLP problem $\textbf{P0}$.

\section{Problem Reformulation and Algorithm Design}\label{sec4}
The coupled and mixed-integer variables as well as the unavailable distribution of uncertain computation complexities for tasks lead to great challenges in solving $\textbf{P0}$. To address these issues, we first employ the DRO based mechanism in Section \ref{subsection-dro}. The DRCC problem is further transformed via CVaR in Section \ref{subsection-cvar} and the problem is decomposed in Section \ref{subsection-decomposition}. Then, we devote to optimizing the trajectories of S-UAVs, GU-UAV connection relationships, and offloading ratios. SCA is designed in Section \ref{subsection-sca} for the trajectory optimization. Besides, we design the global algorithm for solving the formulated non-convex problem in Section \ref{subsection-alg} integrating the CVaR mechanism and SCA.

\subsection{Distributionally Robust Optimization}\label{subsection-dro}
It lacks the distribution information for the random computation complexities $\Delta_i(t)$ in the chance constraints (\ref{c9}) and (\ref{c10}), which leads to troubles in figuring out solutions. Hence, the traditional methods relying on the known probability distributions cannot be directly applied. Fortunately, the DRO method can address this issue by optimizing the problem with chance constraints under the worst-case scenario within the uncertainty set $\mathcal{P}$ \cite{Distributionally-Jia}. Consequently, we turn to seeking a robust result and optimizing the problem under the worst distribution of the uncertain parameter $\Delta_i(t)$ via the DRO method. In detail, $\underset{\mathbb{P}\in\mathcal{P}}{inf}$ represents the lower bound of possibility for all potential distributions under the uncertainty set $\mathcal{P}$. Accordingly, the chance constraints in (\ref{c9}) and (\ref{c10}) are further transformed into the DRCCs by employing the DRO method as
\begin{equation}\label{drcc-1}
	\underset{\mathbb{P}\in\mathcal{P}}{inf}\quad\mathbf{Pr}_{\mathbb{P}}\left\{T^{l,c}_{i}(t)\leq \tau \right\}\geq\alpha,\forall i\in \mathcal{I},t \in \mathcal{T},
\end{equation}
and
\begin{equation}\label{drcc-2}
	\begin{split}
		\underset{\mathbb{P}\in\mathcal{P}}{inf}\quad\mathbf{Pr}_{\mathbb{P}}\left\{(T^{e,c}_{i,m}(t)+T^{e,d}_{i,m}(t))\leq \tau \right\}\geq\alpha,\\
		\forall i\in \mathcal{I}, m\in \mathcal{M},t \in \mathcal{T},
	\end{split}
\end{equation}
respectively, which can be optimized in a conservative and robust manner to struggle with the potential uncertainties. Consequently, the original problem $\textbf{P0}$ is turned into $\textbf{P1}$, i.e.,
\begin{equation}\nonumber
	\begin{split}
		\textbf{P1:}\quad &\min_{\bm{w_s},\bm{\lambda},\bm{\rho}} \sum_{t=1}^T E^{total}(t)\\
		\textrm{s.t.} \quad &(\ref{c1}),(\ref{c2}),(\ref{c3})-(\ref{c6}),\\
		&(\ref{c7}),(\ref{c8}),(\ref{c11}),(\ref{c12}),(\ref{drcc-1}),(\ref{drcc-2}).
	\end{split}
\end{equation}
By transforming the chance constraints into DRCCs, we are essentially seeking a robust solution against the most unfavorable distribution within the considered set. This approach provides a conservative yet reliable way to handle the potential uncertainties. While DRO ensures robustness under worst-case distributions, it is noticed that $\textbf{P1}$ is still complicated involving the coupled binary and continuous variables. Besides, with the uncertain parameter $\Delta_i(t)$, the DRCCs are still tricky to tackle. Thus, we exploit the CVaR mechanism for further convex optimization.

\subsection{CVaR-based Optimization}\label{subsection-cvar}
The lack of distribution information of uncertainties leads to the inability to tackle the reformulated $\textbf{P1}$ with DRCCs. To obtain an approximate and conservative estimation, we explore the CVaR based mechanism, which is a convex approximation to evaluate the chance constraints \cite{Distributionally-Ling, Distributionally-Cui}. Essentially, CVaR measures the expected loss beyond a certain confidence level and provides a more refined measure of risk compared to traditional methods \cite{Distributionally-Zymler,Value-Sarykalin,Optimization-Tyrrell}. In detail, for the loss function $\phi(\xi)$ concerning the random parameter $\xi$, the worst case constraint for CVaR constitutes a conservative estimation for the DRCC \cite{Distributionally-Ding}, given by
\begin{equation}
	\begin{split}
	\underset{\mathbb{P}\in \mathcal{P}}{sup}\quad \mathbb{P}-CVaR_{\alpha}\left( \phi \left( \xi \right) \right) \leq 0,\forall \mathbb{P}\in \mathcal{P}\\
	\Leftrightarrow \underset{\mathbb{P}\in \mathcal{P}}{inf}\quad \mathbb{P}\left\{\phi \left( \xi \right) \leq 0\right\}\geq \alpha ,
	\end{split}
\end{equation}
in which $\alpha$ is the expected safety factor. $\underset{\mathbb{P}\in \mathcal{P}}{sup}$ represents the upper bound of possible distribution under $\mathcal{P}$.
\begin{lemma}[]
	For the loss function $\phi(\xi)=\Theta\xi+\theta^0$, where $\Theta\in\mathbb{R}$ and $\theta^0\in\mathbb{R}$, the worst-case CVaR, i.e., $\underset{\mathbb{P}\in\mathcal{P}}{sup}\quad\mathbb{P}-CVaR_{\alpha}(\phi(\xi))$ can be further converted into the form of second order cone programming (SOCP), i.e.,
	\begin{equation}
		\begin{split}
			\underset{\beta,e,q,z,s}{inf}&\beta +\frac{1}{1-\alpha}\left( e+s \right) ,\\
			&e-\theta ^0+\beta +q-\Theta \mu -z>0,\\
			&e\geq 0, z> 0,\\
			&\begin{Vmatrix}
					q\\
					\Theta\sigma\\
					z-s
			\end{Vmatrix} \leq z+s,\\
		\end{split}
	\end{equation}
in which $\beta$, $e$, $q$, $z$, and $s$ are auxiliary variables. $\mu$ and $\sigma$ are the mean and standard deviation of random parameter $\xi$, respectively.   
\end{lemma}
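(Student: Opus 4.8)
The plan is to begin from the Rockafellar--Uryasev variational representation of CVaR, $CVaR_{\alpha}(\phi(\xi))=\inf_{\beta\in\mathbb{R}}\{\beta+\tfrac{1}{1-\alpha}\mathbb{E}_{\mathbb{P}}[(\phi(\xi)-\beta)^{+}]\}$, so that
\[
\sup_{\mathbb{P}\in\mathcal{P}}\ \mathbb{P}\text{-}CVaR_{\alpha}(\phi(\xi))=\sup_{\mathbb{P}\in\mathcal{P}}\ \inf_{\beta\in\mathbb{R}}\Big\{\beta+\tfrac{1}{1-\alpha}\,\mathbb{E}_{\mathbb{P}}\big[(\Theta\xi+\theta^{0}-\beta)^{+}\big]\Big\}.
\]
The first step is to interchange $\sup_{\mathbb{P}}$ and $\inf_{\beta}$. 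This is justified because the bracketed objective is affine (hence concave and weakly upper semicontinuous) in $\mathbb{P}$, convex in $\beta$, the moment ambiguity set $\mathcal{P}$ is convex and weakly compact, and the minimizing $\beta$ may be confined to a compact interval; a Sion-type minimax theorem then applies. After the swap the problem reads $\inf_{\beta}\{\beta+\tfrac{1}{1-\alpha}\Psi(\beta)\}$ with the inner worst-case expectation $\Psi(\beta):=\sup_{\mathbb{P}\in\mathcal{P}}\mathbb{E}_{\mathbb{P}}[(\Theta\xi+\theta^{0}-\beta)^{+}]$.

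The second step is to evaluate $\Psi(\beta)$ as a one-dimensional Chebyshev-type moment problem: maximize $\int\max\{0,\Theta\xi+\theta^{0}-\beta\}\,d\mathbb{P}$ over nonnegative measures with $\int d\mathbb{P}=1$, $\int\xi\,d\mathbb{P}=\mu$, and $\int\xi^{2}d\mathbb{P}=\mu^{2}+\sigma^{2}$. Passing to its Lagrangian dual yields $\inf\{y_{0}+y_{1}\mu+y_{2}(\mu^{2}+\sigma^{2})\}$ subject to the semi-infinite constraint $y_{0}+y_{1}\xi+y_{2}\xi^{2}\ge\max\{0,\Theta\xi+\theta^{0}-\beta\}$ for every $\xi\in\mathbb{R}$, with zero duality gap because the prescribed moments lie in the interior of the moment cone whenever $\sigma>0$ (a Slater condition). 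That semi-infinite constraint splits into the two univariate quadratics ``$y_{2}\xi^{2}+y_{1}\xi+y_{0}\ge0\ \forall\xi$'' and ``$y_{2}\xi^{2}+(y_{1}-\Theta)\xi+(y_{0}-\theta^{0}+\beta)\ge0\ \forall\xi$''; each requirement $a\xi^{2}+b\xi+c\ge0\ \forall\xi$ is equivalent to $a\ge0$ together with $b^{2}\le4ac$, which is exactly a rotated second-order cone constraint. As a sanity check on the normalization, $\Psi(\beta)$ also admits the closed form
\[
\Psi(\beta)=\tfrac12\Big[(\Theta\mu+\theta^{0}-\beta)+\sqrt{\Theta^{2}\sigma^{2}+(\Theta\mu+\theta^{0}-\beta)^{2}}\,\Big],
\]
obtained from $\sup_{\mathbb{P}}\mathbb{E}_{\mathbb{P}}|\Theta\xi+\theta^{0}-\beta|=\sqrt{\Theta^{2}\sigma^{2}+(\Theta\mu+\theta^{0}-\beta)^{2}}$.

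The third step is to reassemble $\inf_{\beta}\{\beta+\tfrac{1}{1-\alpha}\Psi(\beta)\}$ with the dual SOCP of $\Psi(\beta)$, introduce epigraph/slack variables for the linear moment term and the two rotated cones, and simplify. Identifying these with $e,q,z,s$ (with $e$ and $z$ carrying the nonnegativity multipliers, $q$ and $s$ the cone slacks, and the last inequality being the rotated-cone form $q^{2}+\Theta^{2}\sigma^{2}\le4zs$) reproduces the SOCP in the statement, the outer $\inf$ over $\beta$ being inherited by the objective $\beta+\tfrac{1}{1-\alpha}(e+s)$.

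I expect the main obstacle to be the two ``soft'' steps rather than the algebra: rigorously justifying the $\sup$--$\inf$ interchange (weak compactness of $\mathcal{P}$ and coercivity in $\beta$) and establishing that the moment problem and its conic dual have no duality gap, since it is precisely these facts that legitimize replacing an optimization over the infinite-dimensional worst-case distribution by a finite-dimensional SOCP. A secondary, purely bookkeeping difficulty is picking the change of variables that lands on the particular scaling $\beta+\tfrac{1}{1-\alpha}(e+s)$ instead of an equivalent but differently-parametrized SOCP, which the closed-form expression for $\Psi(\beta)$ above lets one pin down.
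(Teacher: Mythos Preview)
Your proposal is correct and follows essentially the same route as the paper: Rockafellar--Uryasev representation, swap of $\sup$ and $\inf$ (the paper invokes a ``Saddle Point Theorem''), dualization of the inner moment problem, reduction of the two semi-infinite quadratic constraints to discriminant/rotated-cone conditions, and a final change of variables to reach the stated SOCP. The paper is terser on the justifications you flag (minimax interchange, zero duality gap) and supplies the explicit substitution $\chi_1=e+\tfrac{1}{4z}(q-\Theta\mu)^2$, $\chi_2=\tfrac{q-\Theta\mu}{2z}$, $\chi_3=\tfrac{1}{4z}$, which is precisely the bookkeeping step you anticipated.
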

\begin{proof}
	The detailed proof is in Appendix \ref{appendix}.
\end{proof}

For clarity, we use the variable sets to represent the groups of auxiliary variables for all time slots, i.e., $\varrho_{i,t,1}=\left\{\beta_{i,t,1},e_{i,t,1},q_{i,t,1},z_{i,t,1},s_{i,t,1}\right\}$ and $\varrho_{i,t,2}=\left\{\beta_{i,t,2},e_{i,t,2},q_{i,t,2},z_{i,t,2},s_{i,t,2}\right\}$, respectively, which help in reformulating the CVaR-based constraint into an SOCP-compatible form. Hence, leveraging $\textbf{Lemma}$, for all $i\in\mathcal{I}$ and $t\in\mathcal{T}$, the DRCCs in (\ref{drcc-1}) and (\ref{drcc-2}) can be turned into the MISOCP constraints in (\ref{drcc-3}) and (\ref{drcc-4}) as follows.

\begin{equation}\label{drcc-3}
	\begin{cases}
		&\underset{\varrho_{i,t,1}}{inf}\quad\beta_{i,t,1} +\frac{1}{1-\alpha}\left( e_{i,t,1}+s_{i,t,1}\right)\leq 0,\\
		&e_{i,t,1}-\theta_{i,1}(t)+\beta_{i,t,1}+q_{i,t,1}>\Theta_{i,1}(t) \mu_{i}(t) +z_{i,t,1},\\
		&e_{i,t,1}\geq 0, z_{i,t,1}> 0,\\
		&\begin{Vmatrix}
				q_{i,t,1}\\
				\Theta_{i,1}(t)\sigma_{i}(t)\\
				z_{i,t,1}-s_{i,t,1}
		\end{Vmatrix} \leq z_{i,t,1}+s_{i,t,1},\\
	\end{cases}
\end{equation}
and
\begin{equation}\label{drcc-4}
	\begin{cases}
		&\underset{\varrho_{i,t,2}}{inf}\quad\beta_{i,t,2} +\frac{1}{1-\alpha}\left( e_{i,t,2}+s_{i,t,2}\right)\leq 0,\\
		&e_{i,t,2}-\theta_{i,2}(t)+\beta_{i,t,2}+q_{i,t,2}>\Theta_{i,2}(t) \mu_{i}(t)+z_{i,t,2},\\
		&e_{i,t,2}\geq 0, z_{i,t,2}> 0,\\
		&\begin{Vmatrix}
				q_{i,t,2}\\
				\Theta_{i,2}(t)\sigma_{i}(t)\\
				z_{i,t,2}-s_{i,t,2}
		\end{Vmatrix} \leq z_{i,t,2}+s_{i,t,2},\\
	\end{cases}
\end{equation}
respectively, where
\begin{equation}
	\begin{cases}
		\Theta_{i,1}(t)=\frac{\left(1-\rho_i(t)\right)L_i(t)}{f_g},\forall i\in \mathcal{I}, t\in \mathcal{T},\\
		\Theta_{i,2}(t)=\frac{\sum\limits_{m=1}^M\lambda_{i,m}(t)\rho_i(t)L_i(t)}{f_u},\forall i\in \mathcal{I}, t\in \mathcal{T}.
	\end{cases}
\end{equation}

\begin{equation}
		\theta_{i,1}(t)=\frac{\left(1-\rho_i(t)\right)\bar{c}_i(t)L_i(t)}{f_g}-\tau,\forall i\in \mathcal{I}, t\in \mathcal{T},
\end{equation}
and
\begin{equation}
	\begin{split}
		\theta_{i,2}(t)=&\frac{\sum\limits_{m=1}^M\lambda_{i,m}(t)\rho_i(t)\bar{c}_i(t)L_i(t)}{f_u}\\
		&+\sum\limits_{m=1}^M \frac{\lambda_{i,m}(t)\rho_i(t)L_i(t)}{r^{sec}_{i,m}(t)}-\tau,\forall i\in \mathcal{I}, t\in \mathcal{T},
	\end{split}
\end{equation}
are obtained according to the complete expressions of $T^{l,c}_{i}(t)$ and $T^{e,c}_{i,m}(t)+T^{e,d}_{i,m}(t)$ in (\ref{e1}), (\ref{e2}), and (\ref{e3}), respectively. Therefore, by introducing the concept on the worst case of CVaR, (\ref{drcc-3}) and (\ref{drcc-4}) are further derived as MISOCP forms, and $\textbf{P1}$ is transformed into $\textbf{P2}$ without the uncertain parameters $\Delta_i(t)$, detailed as
\begin{equation}\nonumber
	\begin{split}
		\textbf{P2:}\quad &\min_{\bm{w_s},\bm{\lambda},\bm{\rho},\bm{\varrho_1},\bm{\varrho_2}} \sum_{t=1}^T E^{total}(t)\\
		\textrm{s.t.} \quad &(\ref{c1}),(\ref{c2}),(\ref{c3})-(\ref{c6}),(\ref{c7}),\\
		&(\ref{c8}),(\ref{c11}),(\ref{c12}),(\ref{drcc-3}),(\ref{drcc-4}),
	\end{split}
\end{equation}
where $\bm{\varrho_1}=\left\{\varrho_{i,t,1}|\forall i,\forall t\right\}$ and $\bm{\varrho_2}=\left\{\varrho_{i,t,2}|\forall i,\forall t\right\}$. By applying the DRO and CVaR mechanisms, we provide a practical way to handle the unexpected uncertainties in the worst case, and the problem is optimized in a risk-aware manner. However, the coupled binary variables $\bm{\lambda}$ and continuous variables $\bm{w_s},\bm{\rho},\bm{\varrho_1},\bm{\varrho_2}$ in \textbf{P2} are still hard to solve.

\subsection{Problem Decomposition}\label{subsection-decomposition}
With the aforementioned concept and theory on DRO and CVaR, the chance constraints in the original problem $\textbf{P0}$ are transformed in the form of MISOCP. However, it is observed that \textbf{P2} is still non-convex with the MISOCP constraints. To further deal with the coupled variables including binary and continuous variables, we first decompose $\textbf{P2}$ into three sub-problems, including: 1) $\textbf{P3}$ concerning optimizations for S-UAV trajectories $\bm{w_s}$; 2) $\textbf{P4}$ related to the GU-UAV connection relationships $\bm{\lambda}$; 3) $\textbf{P5}$ on offloading strategies $\bm{\rho}$ and auxiliary variables $\bm{\varrho_1}, \bm{\varrho_2}$. Specifically, with the given GU-UAV connection indicators $\bm{\lambda}$ and offloading strategies $\bm{\rho}$ as well as auxiliary variables $\bm{\varrho_1}$ and $\bm{\varrho_2}$, $\textbf{P3}$ is detailed as:
\begin{equation}\nonumber
	\begin{split}
		\textbf{P3:}\quad &\min_{\bm{w_s}} \sum_{t=1}^T E^{total}(t)\\
		\textrm{s.t.} \quad &(\ref{c3})-(\ref{c6}),(\ref{c7}),(\ref{drcc-4}),
	\end{split}
\end{equation}
which is non-convex. Besides, with the fixed variables $\bm{w_s}$, $\bm{\varrho_1}$, $\bm{\varrho_2}$, and $\bm{\rho}$, $\textbf{P4}$ is an integer programming problem, given as:
\begin{equation}\nonumber
	\begin{split}
		\textbf{P4:}\quad &\min_{\bm{\lambda}} \sum_{t=1}^T E^{total}(t)\\
		\textrm{s.t.} \quad &(\ref{c1}),(\ref{c2}),(\ref{c8}),(\ref{c12}),(\ref{drcc-4}).
	\end{split}
\end{equation}
Moreover, with the determined value of $\bm{w_s}$ and $\bm{\lambda}$, $\textbf{P5}$ concerning the offloading ratios is
\begin{equation}\nonumber
	\begin{split}
		\textbf{P5:}\quad &\min_{\bm{\varrho_1}, \bm{\varrho_2},\bm{\rho}} \sum_{t=1}^T E^{total}(t)\\
		\textrm{s.t.} \quad &(\ref{c8}),(\ref{c11}),(\ref{drcc-3}),(\ref{drcc-4}).
	\end{split}
\end{equation}
Note that the solutions of $\textbf{P5}$ can be computed using a standard convex optimization toolkit such as CVX, since $\textbf{P5}$ is a convex problem with SOCP constraints.
\subsection{SCA for S-UAV Trajectories}\label{subsection-sca}
$\textbf{P3}$ is a non-convex problem with regard to the distances between S-UAVs and GUs, i.e., $||w_m(t)-w_i||$ in constraint (\ref{drcc-4}), which brings up the non-convexity \cite{Joint-Xu}. To tackle the complicated problem, we develop an SCA based method for the sub-optimal trajectory design of S-UAVs. Specifically, the transmission latency from GUs to UAVs is detailed as
\begin{equation}
	\begin{split}
			T_{i,m}^{e,d}(t)=\frac{\lambda_{i,m}(t)\rho_i(t)L_i(t)}{\left[B_0\log_2(1+\frac{p_0g_0}{n_0B_0(||w_m(t)-w_i||^2+h_u^2)})-r_{i,e}^{eav}(t)\right]^+},\\
			\forall i\in \mathcal{I}, m\in \mathcal{M}, t\in \mathcal{T}.
	\end{split}
\end{equation}

\begin{figure*}[!ht]
	\begin{equation}\label{e51}
		T_{i,m}^{e,d}(t)\leq T_{i,m}^{up}(t) = \frac{\lambda_{i,m}(t)\rho_i(t)L_i(t)}{\left(B_0\log_2\left(\varUpsilon_{i,m}(t)\right)-r_{i,m}^{eav}(t)\right)^+}
		+\frac{\lambda_{i,m}(t)p_0g_0\rho_i(t)L_i(t)\ln2\left(\left\|w_m(t)-w_i\right\|^{2}-\left\|w_m^{(l)}(t)-w_i\right\|^{2}\right)}{n_0 \varUpsilon_{i,m}(t) \left(B_0\ln\left(\varUpsilon_{i,m}(t)\right)-r^{eav}_{i,e}(t)\ln2\right)^2\left(\left\|w_m^{(l)}(t)-w_i\right\|^{2}+h_u^{2}\right)^2}.
	\end{equation}
	\rule{\textwidth}{1pt}
\end{figure*}

Since $T_{i,m}^{e,d}(t)$ is a concave function concerning $||w_m(t)-w_i||^2$, for any given local point $w_m^{(l)}(t)$ in the $l$-th iteration and for all $i\in \mathcal{I}, m\in \mathcal{M}, t\in \mathcal{T}$, the global upper bound function of $T_{i,m}^{e,d}(t)$ can be replaced by $T_{i,m}^{up}(t)$ via utilizing the first-order Taylor expansion \cite{MEC-Zeng} as (\ref{e51}) , in which

\begin{equation}
	\begin{split}
		\varUpsilon_{i, m}(t) = 1+\frac{p_0g_0}{n_0B_0\left(\left\|w_m^{(l)}(t)-w_i\right\|^{2}+h_u^{2}\right)},\\
		\forall i\in \mathcal{I}, m\in \mathcal{M}, t\in \mathcal{T}.
	\end{split}
\end{equation}
\begin{algorithm}[!t]
	\caption{SCA for S-UAV trajectory optimization}\label{alg-sca}
	\begin{algorithmic}[1]
		\REQUIRE An initial feasible solution $\bm{w_s}^{(0)}$.
		\STATE \textit{Initialization:} Set the iteration index $l=0$.\label{line1-1}
		\REPEAT
		\STATE {Solve problem $\textbf{P6}$ with the given $\bm{w_s}^{(l)}$ to obtain the optimal solution $\bm{w_s}^{\star }$.}\label{line1-2}
		\STATE {Update $\bm{w_s}^{(l+1)} \leftarrow \bm{w_s}^{\star}$.}\label{line1-3}
		\STATE {Update the iteration index $l=l+1$.}\label{line1-4}
		\UNTIL {The result converges to the tolerant accuracy.}\label{line1-5}
		\ENSURE The trajectories of S-UAVs $\bm{w_s}$.
	\end{algorithmic}
\end{algorithm}

Thus, we can further introduce a more manageable function at a given local point in each iteration for estimation. Based on the above analyses, by taking the first-order Taylor expansion, (\ref{e51}) can be substituted into $\textbf{P3}$, and the problem can be rewritten as

\begin{subequations}
	\begin{align}
		\textbf{P6:}\quad &\min_{\bm{w_s}} \sum_{t=1}^T \widehat{E}(t)\nonumber\\
		\textrm{s.t.} \quad &(\ref{c3})-(\ref{c6}),(\ref{c7}),\nonumber\\
		&\sum\limits_{m=1}^M T_{i,m}^{up}(t) + \frac{\sum\limits_{m=1}^M\lambda_{i,m}(t)\rho_i(t)L_i(t)\bar{c}_i(t)}{f_u}<\varphi_i(t),\nonumber\\
		&\qquad\qquad\qquad\qquad\qquad\qquad\qquad\forall i\in \mathcal{I}, t\in \mathcal{T},
	\end{align}
\end{subequations}
in which 
\begin{equation}
	\begin{split}
		\varphi_i(t)=\tau + e_{i,t,2} + \beta_{i,t,2} + q_{i,t,2} - \Theta_{i,2}(t) \mu_{i}(t) -z_{i,t,2},\\
		\forall i\in \mathcal{I}, t\in \mathcal{T},
	\end{split}
\end{equation}
and
\begin{equation}
	\begin{split}
		\widehat{E}(t)=&\sum_{i=1}^I E^{l,c}_i(t)+\kappa\sum_{m=1}^M E^{fly}_m(t)+\kappa\sum_{i=1}^I\sum_{m=1}^M E^{e,c}_{i,m}(t)\\
		&+\sum_{i=1}^I\sum_{m=1}^M p_0T_{i,m}^{up}(t),\forall i\in \mathcal{I}, t\in \mathcal{T}.
	\end{split}
\end{equation}

Note that $\textbf{P6}$ is convex and can be efficiently solved by a standard convex toolkit. The detailed algorithm concerning the SCA mechanism is designed in Algorithm \ref{alg-sca}. The initial feasible solution $\bm{w_s}^{(0)}$ is given which satisfy the constraints of $\textbf{P6}$, and the iteration index $l=0$ is set to record the number of iterations (line \ref{line1-1}). Then, in each iteration, for the given $\bm{w_s}^{(l)}$, problem $\textbf{P6}$ is solved via CVX for the optimal solution $\bm{w_s}^{\star }$ (line \ref{line1-2}). $\bm{w_s}^{(l+1)}$ is updated with the obtained $\bm{w_s}^{\star}$ and the iteration index $l$ is updated for the next iteration (lines \ref{line1-3}-\ref{line1-4}). The iteration process continues until the result converges to the tolerance accuracy as provided in line \ref{line1-5}. We denote $l_{it}$ as the number of iterations. Since Algorithm \ref{alg-sca} is operated with the convex solver based on the interior-point method, the time complexity is mainly dependent on the number of S-UAVs $M$ and time slots $T$. Besides, the number of iterations $l_{it}$ has an impact as well. Accordingly, the time complexity of the proposed Algorithm \ref{alg-sca} is given by $\mathcal{O}\left((MT)^{3.5}l_{it}\log_2{\epsilon^{-1}}\right)$, where $\epsilon$ is the accepted duality gap \cite{Delay-Li}.

\begin{figure}
	\centering{\includegraphics[width=0.95\linewidth]{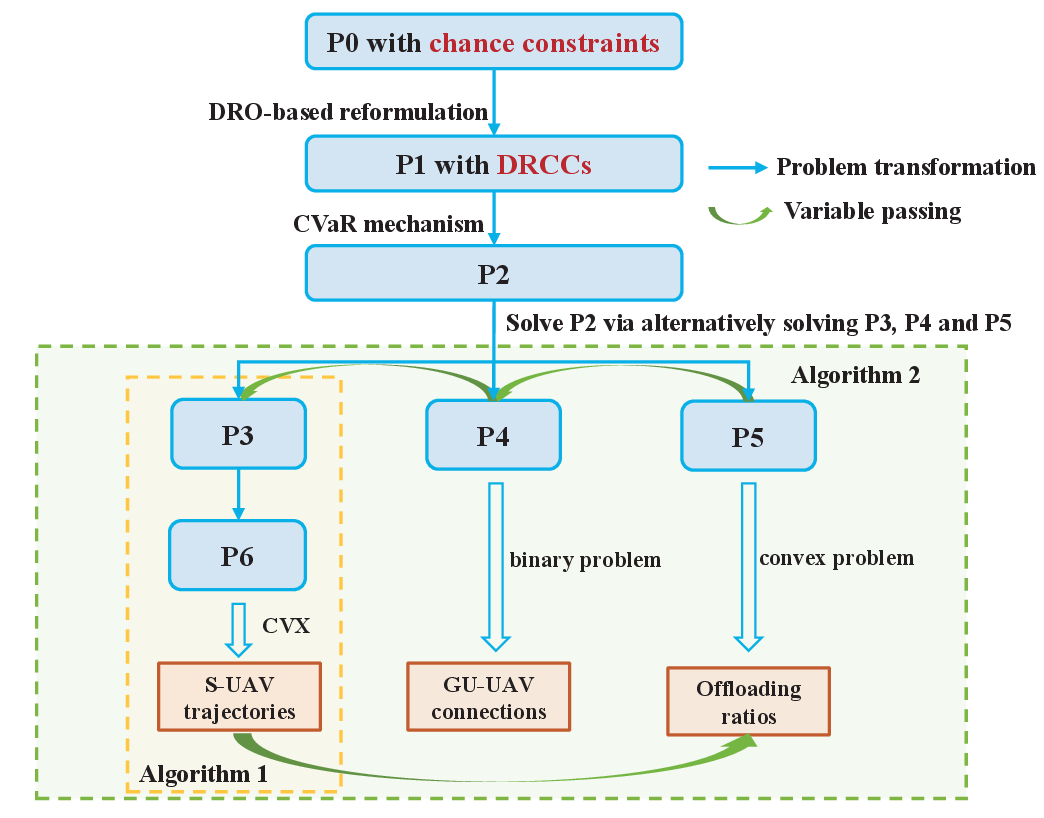}}
	\caption{Overview of the global algorithm.}
	\label{Fig-2}
\end{figure}
\subsection{Global Algorithm}\label{subsection-alg}

With the trajectory sub-problem addressed via SCA, we can integrate the above steps into a global algorithm that iteratively solves the three sub-problems to converge to a quasi-optimal solution. The global algorithm is provided in Algorithm \ref{alg-global}. A feasible initial solution is set, and with the fixed value of $\bm{w_s}$ and $\bm{\lambda}$, $\textbf{P5}$ concerning offloading decisions is a standard convex problem which can be worked out by a standard convex toolkit such as CVX (line \ref{line2-3}). With the determined $\bm{\varrho_1}^r, \bm{\varrho_2}^r,\bm{\rho}^r$ and $\bm{w_s}^r$, $\textbf{P4}$ is an integer programming problem and can be solved via optimization tools such as MOSEK (line \ref{line2-4}). Then, Algorithm \ref{alg-sca} is performed to obtain the trajectory of S-UAVs for $\textbf{P6}$ (line \ref{line2-5}). The value $\varGamma^r$ of objective function during $r$th iteration is updated and the above process is repeated until the convergence accuracy $\zeta$ is achieved (lines \ref{line2-6}-\ref{line2-8}).

\begin{algorithm}[!t]
	\caption{Joint Optimization for GU-S-UAV Association, Offloading Ratio and S-UAV Trajectory Scheduling}\label{alg-global}
	\begin{algorithmic}[1]
		\REQUIRE Parameters of the network.
		\STATE \textit{Initialization:} Set the iteration index $r=0$. Set the convergence accuracy $\zeta$\label{line2-1}. Initialize the S-UAV trajectories, the GU-UAV connection indicators, and offloading strategies.
		\REPEAT
		\STATE {With fixed $\bm{w_s}$ and $\bm{\lambda}$, solve problem $\textbf{P5}$ to obtain the optimal solution $\bm{\varrho_1}^r, \bm{\varrho_2}^r,\bm{\rho}^r$.}\label{line2-3}
		\STATE {Substitute $\bm{\varrho_1}^r, \bm{\varrho_2}^r,\bm{\rho}^r$ and $\bm{w_s}^r$ into $\textbf{P4}$ for GU-UAV connection relationships $\bm{\lambda}^r$.}\label{line2-4}
		\STATE {Update the values of optimization variables and operate Algorithm \ref{alg-sca} for $\textbf{P6}$ to obtain the S-UAV trajectories $\bm{w_s}^r$.}\label{line2-5}
		\STATE {Update the values of optimization variables and get the objective function $\varGamma^r=\sum\limits_{t=1}^T E^{total}(t)$.}\label{line2-6}
		\STATE {Update the index indicator $r\leftarrow r+1$.}\label{line2-7}
		\UNTIL {$|\varGamma^r-\varGamma^{r+1}|\leq \zeta $.}
		\ENSURE The trajectories of S-UAVs $\bm{w_s}$, GU-UAV connection relationships $\bm{\lambda}$ and offloading ratios $\bm{\rho}$.\label{line2-8}
	\end{algorithmic}
\end{algorithm}
The overview of the designed algorithms and relationships among the problems are shown in Fig. \ref{Fig-2}. The original problem $\textbf{P0}$ involving chance constraints without distribution information for the uncertain parameters is firstly reformulated into $\textbf{P1}$ via the DRO mechanism. Afterwards, by the theory of CVaR, $\textbf{P1}$ with DRCCs are further transformed into $\textbf{P2}$ through mathematical analyses. \textbf{P2} is in the form of MISOCP and comprised of three convex or approximate convex sub-problems. $\textbf{P3}$ is further transformed into $\textbf{P6}$ by means of SCA and solved by using the designed Algorithm \ref{alg-sca}. \textbf{P3}, \textbf{P4} and \textbf{P5} are solved in an alternative manner with the non-increasing objective function value after each iteration. Consequently, the objective value is obtained converging to the local optimal solution. Note that $\textbf{P5}$ is an SOCP problem with $IT$ decision variables and $10IT$ auxiliary variables, and the time complexity of one iteration for CVX to solve $\textbf{P5}$ is $\mathcal{O}\left((IT)^{3.5}\log_2{{\epsilon}^{-1}}\right)$ \cite{Computation-Zhang}. Besides, the binary variables play a dominate role in the time complexity of $\textbf{P4}$ within one iteration, given by $\mathcal{O}(2^{IMT})$. With the time complexity of $\mathcal{O}\left((MT)^{3.5}l_{it}\log_2{{\epsilon}^{-1}}\right)$ for Algorithm \ref{alg-sca}, the corresponding time complexity for global Algorithm \ref{alg-global} is $\mathcal{O}\left(r_{it}\left((IT)^{3.5}\log_2{{\epsilon}^{-1}} + 2^{IMT} + l_{it}(MT)^{3.5}\log_2{{\epsilon}^{-1}}\right)\right)$, where $r_{it}$ represents the number of iterations for convergence. Although the time complexity in solving $\textbf{P4}$ shows an exponentially increasing trend with the increment of network scale, the proposed method can converge to quasi-optimal solutions without extensive offline preparation.

\section{Simulation Results}\label{sec5}

In this section, we conduct numerical simulations to evaluate the performance of our designed algorithms. We consider a $1,000\times1,000$ m area with 3 S-UAVs and 20 time slots. The flight height of S-UAVs and the E-UAV is set to $100$ m. The coordinate of GJ is $(500 \text{ m},500\text{ m})$. The parameter settings for simulations are given in Table \ref{parameter} \cite{Energy-Zeng,Dynamic-Xu}. The implementation is accomplished by employing CVX and MOSEK as the optimization tools.
\begin{table}[!t]
	\centering
	\renewcommand\arraystretch{1.25}
	\caption{SIMULATION PARAMETERS}\label{parameter}
	\begin{tabular}{|c|c||c|c|}
	\hline
	Parameters & Value & Parameters & Value\\
	\hline
	\hline
	$M_{max}$ & 4 & $\tau$ & 2 s\\
	\hline
	$L_i(t)$ & $\left[ 1,10 \right]$ Mbits & $\bar{c}_i(t)$ & $\left[10,100\right]$ cycles/bit\\
	\hline
    $\mu_i(t)$ & 0 cycles/bit & $\sigma_i(t)$ & $0.01\bar{c}_i(t)$ \\
	\hline
	$\kappa$ & $5\times 10^{-4}$ & $\rho_0$ & $1.225$ kg/$\text{m}^3$ \\
	\hline
	$v_{bla}$ & 120 m/s & $v_{rot}$ & 4.03 m/s \\
	\hline
	$s_0$ & 0.05 & $g$ & 0.6 \\
	\hline
	$a_0$ & 0.503 $\text{m}^2$ & $v_0$ & 20 m/s\\
	\hline
	$P_1$ & 79.85 W &  $P_2$ & 88.63 W \\
	\hline
	$p_0$ & 2 W  & $p^{j}$ &20 W \\
	\hline
	$n_0$ & -174 dBm/Hz & $B_0$ & 10 MHz \\
	\hline
	$g_0$ &-50 dB & $\alpha$ & $95\%$ \\
	\hline
	$f_g$ & $10^8$ Hz & $f_u$ & $10^9$ Hz\\
	\hline
	$\varepsilon_g$ & $10^{-28}$ & $\varepsilon_u$ & $10^{-28}$ \\
	\hline
	\end{tabular}
\end{table}

\subsection{Performance of Trajectory Optimization}
Fig. \ref{f-1}(a) displays the simulation scenario, which encompasses 10 GUs, an E-UAV and a GJ within the designated area. The optimized trajectories of S-UAVs and initial trajectory of S-UAVs are shown, in which S-UAVs traverse in straight lines from their initial positions to the final destinations and they advance the same distance during each time slot. Notably, the S-UAVs tend to yaw towards their associated GUs after optimization. Besides, in some time slots, the S-UAVs prefer to maintain the hovering state, thereby reducing the energy consumption. This phenomenon shows that the proposed algorithm incorporates a comprehensive approach to balance the flight energy consumption and hovering energy consumption.

Fig. \ref{f-1}(b) evaluates the performance of the proposed SCA-based trajectory optimization approach. Compared with the baseline scheme without trajectory optimization for S-UAVs, in which S-UAVs maintain their directions towards the final points, by elaborately designing the trajectories of S-UAVs, the proposed algorithm can efficiently reduce the energy consumption of GUs and thereby, improve their QoS. This is because the transmission distances decrease when S-UAVs yew to the GUs, resulting in less transmission energy cost.
\begin{figure}
	\centering
	\subfloat[Scenario illustration and optimized trajectories of S-UAVs.]{
	\includegraphics[width=0.9\linewidth]{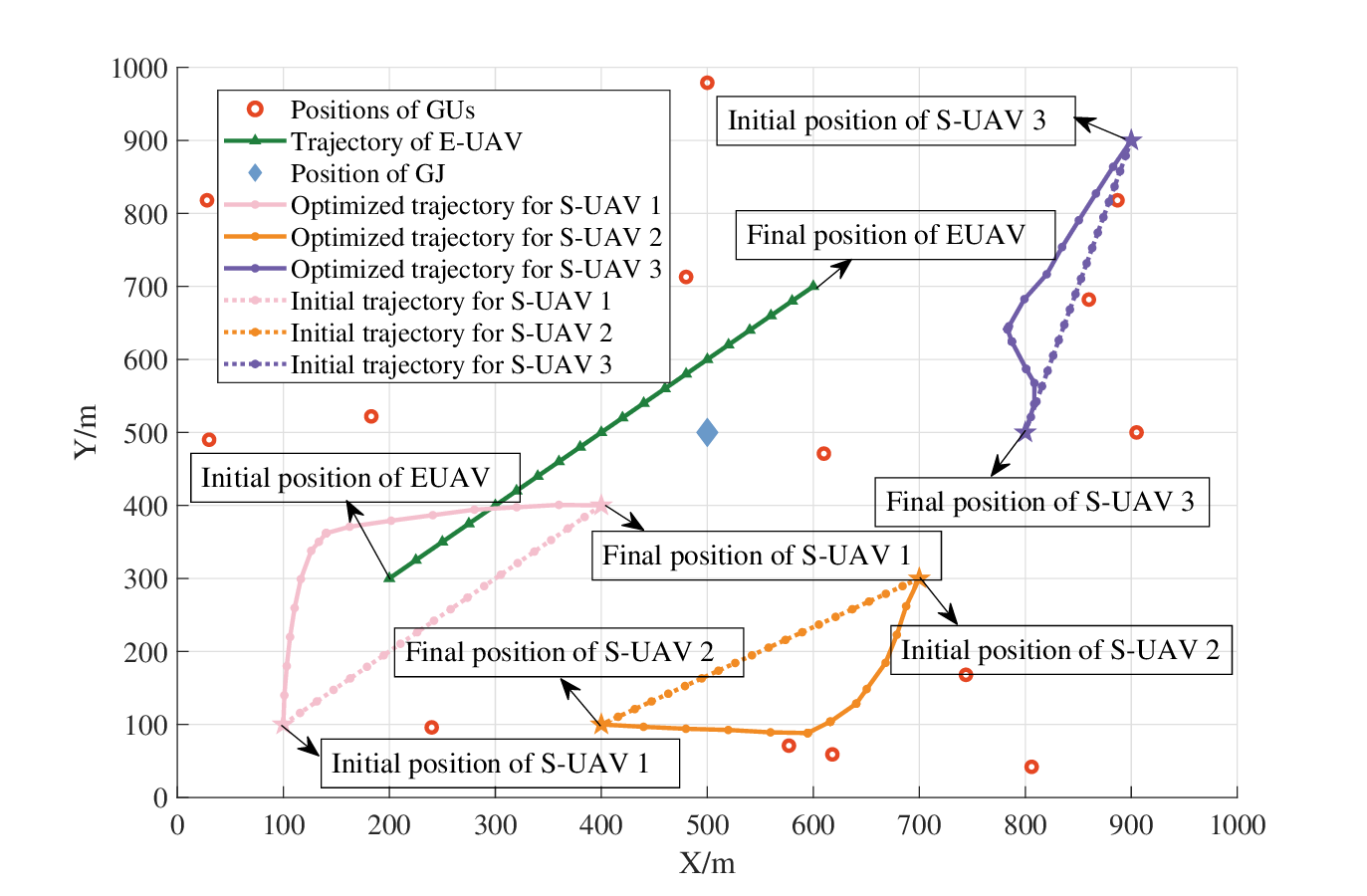}
	}
	\quad
	\subfloat[Energy cost of GUs with different S-UAV trajectories.]{
	\includegraphics[width=0.9\linewidth]{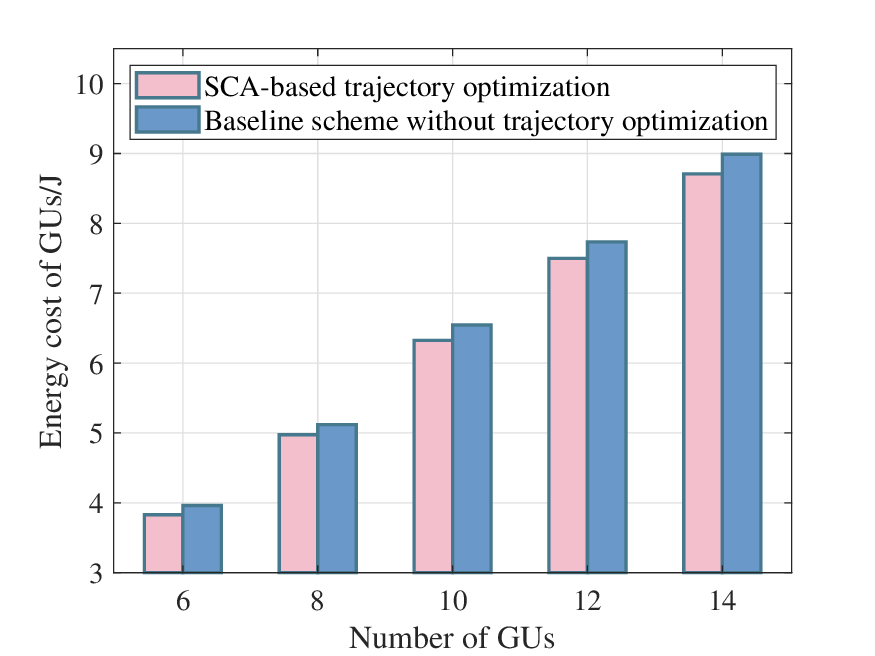}
	}
	\caption{Trajectory optimization.}\label{f-1}
\end{figure}
\subsection{Verification for Robustness}

\begin{figure*}[!t]
	\subfloat[Total Energy consumption compared with ideal circumstance.]{
	\includegraphics[width=0.30\linewidth]{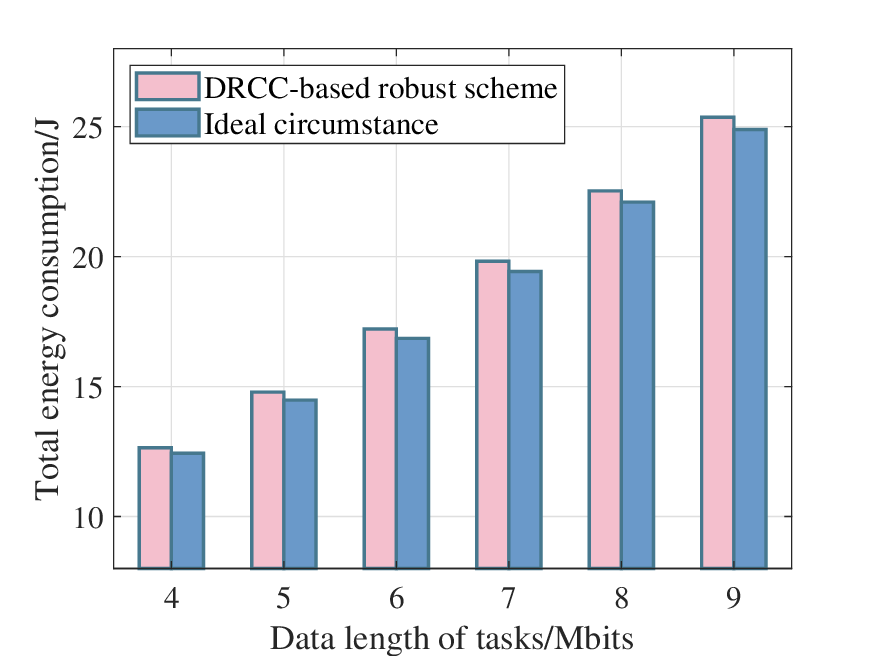}
	}
	\quad
	\subfloat[Offloaded data amount on S-UAVs compared with ideal circumstance.]{
	\includegraphics[width=0.30\linewidth]{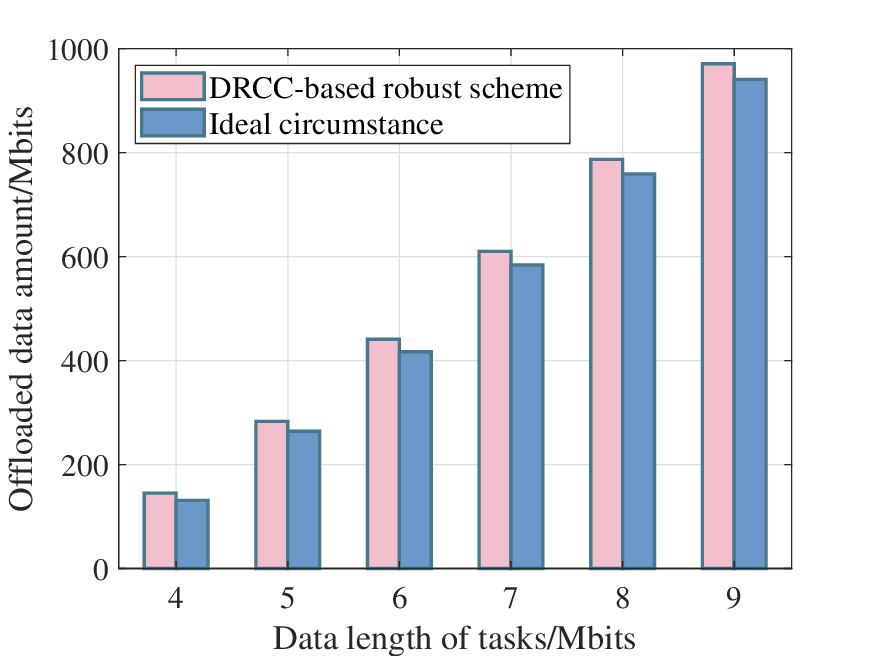}
	}
	\quad
	\subfloat[Impact of $\sigma_i(t)$ on the energy consumpion.]{
	\includegraphics[width=0.30\linewidth]{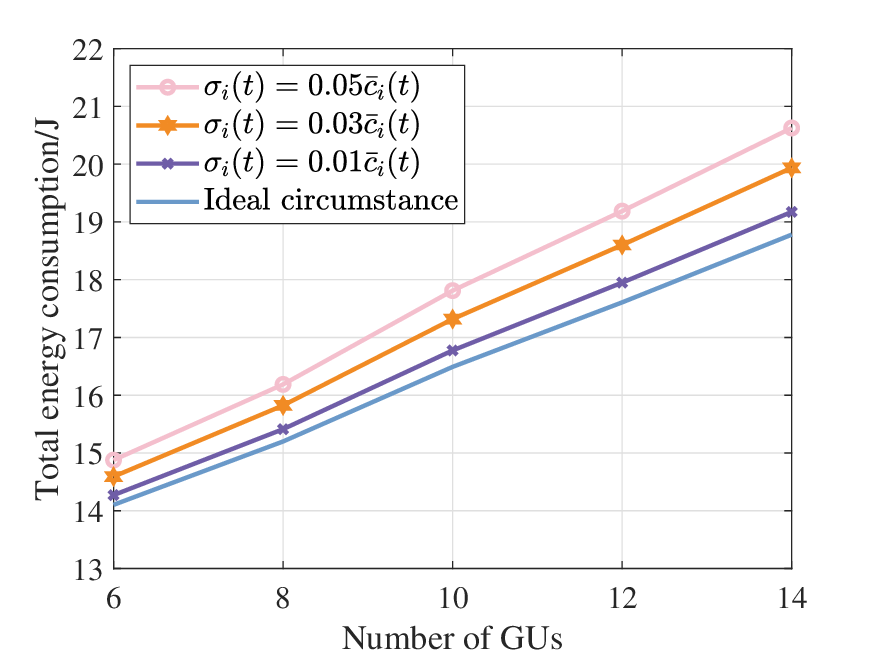}
	}
	\\
	\subfloat[Impact of $\sigma_i(t)$ on the offloaded data amount on S-UAVs.]{
	\includegraphics[width=0.30\linewidth]{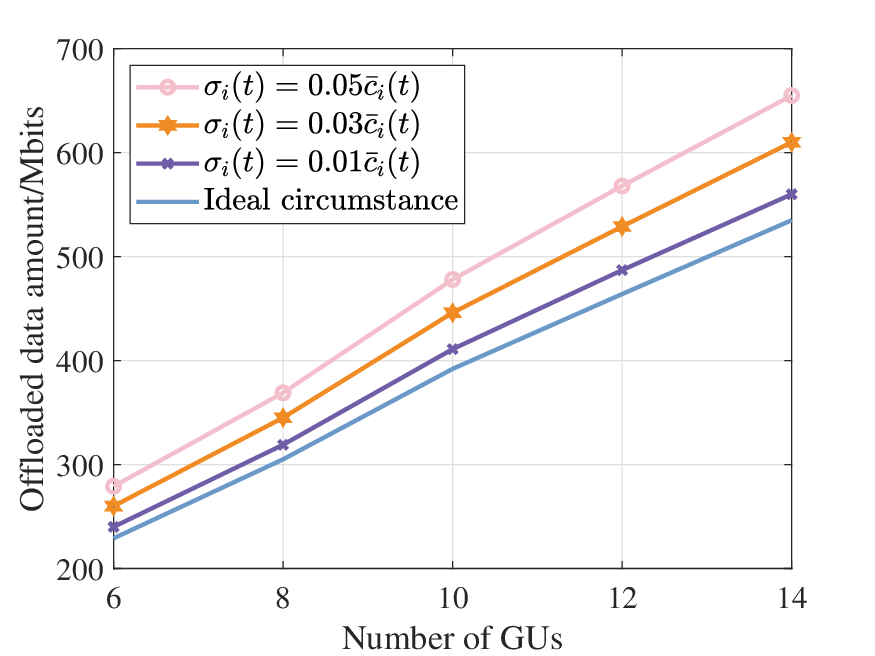}
	}
	\quad
	\subfloat[Impact of $\alpha$ on the energy consumption.]{
	\includegraphics[width=0.30\linewidth]{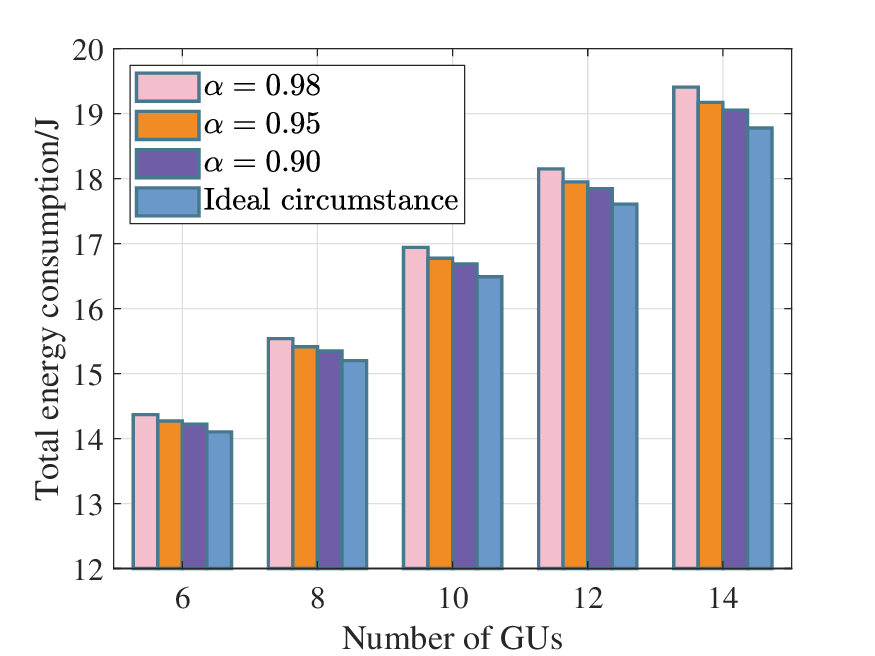}
	}
	\quad
	\subfloat[Impact of $\alpha$ on the offloaded data amoun on S-UAVs.]{
	\includegraphics[width=0.30\linewidth]{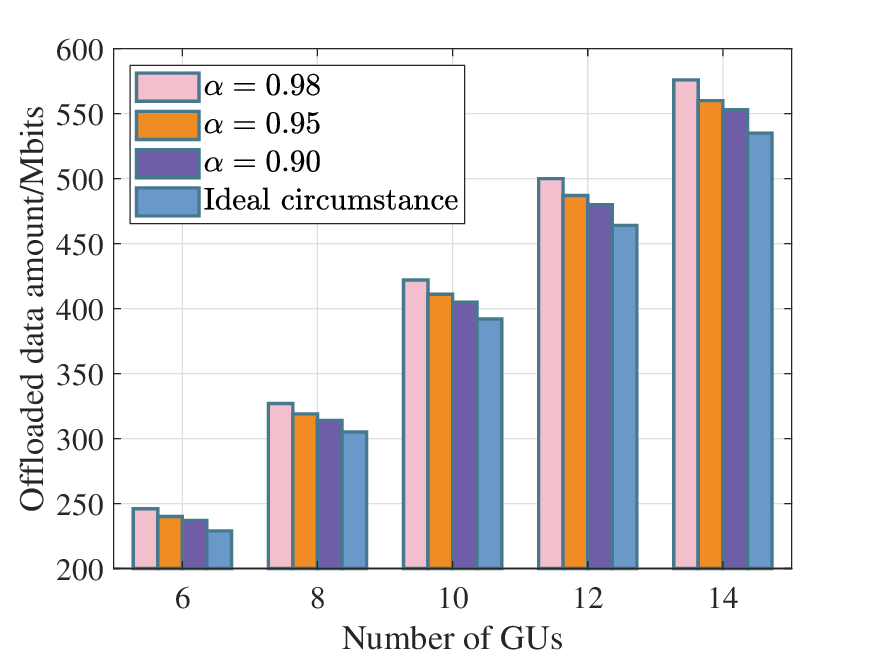}
	}
	\caption{Verification for robustness and impact of uncertain computation complexities and safety factor.}\label{f-2}
\end{figure*}
To verify the robustness of the proposed algorithm, we compare the energy cost and offloaded data amount with the results obtained under ideal conditions, as shown in Figs. \ref{f-2}(a) and (b). Specifically, in the ideal scenario, the computation complexities of tasks are obtained precisely without the influence of the uncertain task computation complexity estimation error $\Delta_i(t)$. Under such circumstances, the chance constraints involving uncertain parameters are transformed into deterministic constraints without randomness. As illustrated in Fig. \ref{f-2}(a), the total energy consumption increases with the increment of the data amount of tasks. Additionally, it is observed that the total energy consumption of the system with uncertain parameters is higher than that in the ideal cases. This is because GUs need to offload more tasks to the S-UAVs, as shown in Fig. \ref{f-2}(b), to counteract the impact of uncertain computation complexities. In contrast, the available exact computation complexity in the ideal circumstance enables GUs to offload less data and consume less energy in an optimal method without considering the worst case. With more data offloaded to the S-UAVs and more energy consumed, the task latency constraints can be guaranteed within the specified time deadline even though the system suffers from the unexpected randomness in the practical applications.
\begin{figure*}
	\centering
	\subfloat[Total energy consumption.]{
	\includegraphics[width=0.35\linewidth]{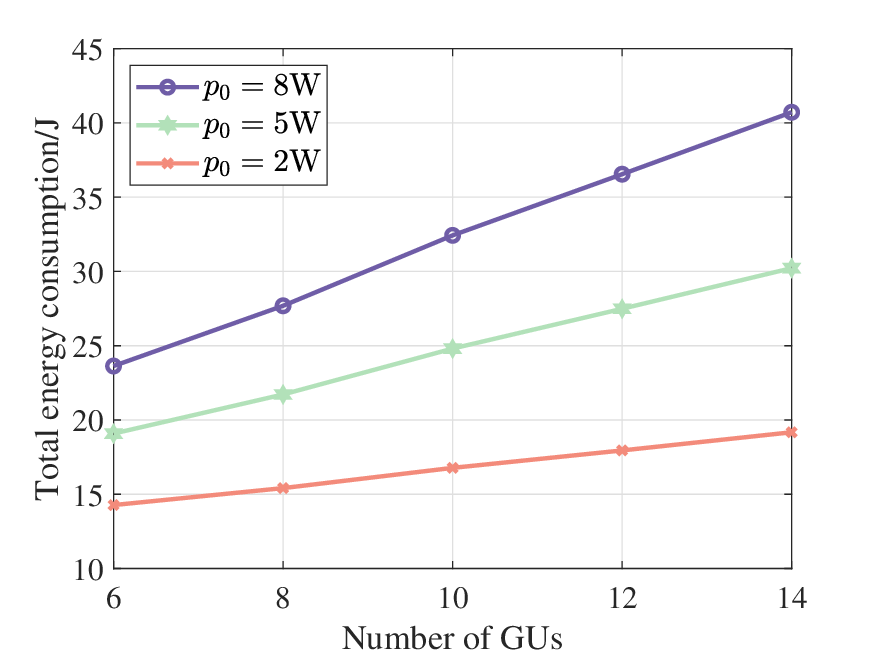}
	}
	\subfloat[Transmission energy consumption.]{
	\includegraphics[width=0.35\linewidth]{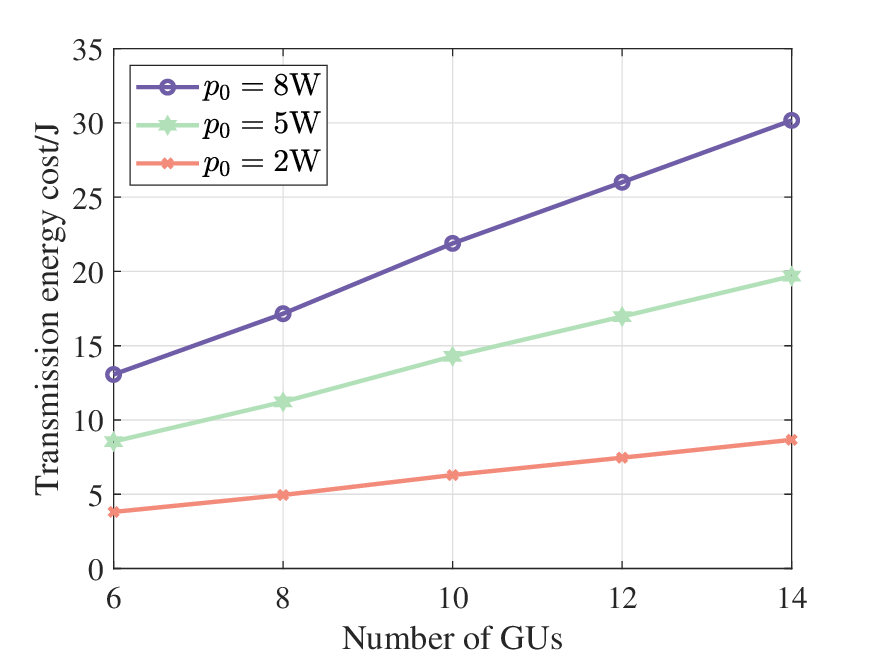}
	}
	\\
	\subfloat[Total energy consumption.]{
	\includegraphics[width=0.35\linewidth]{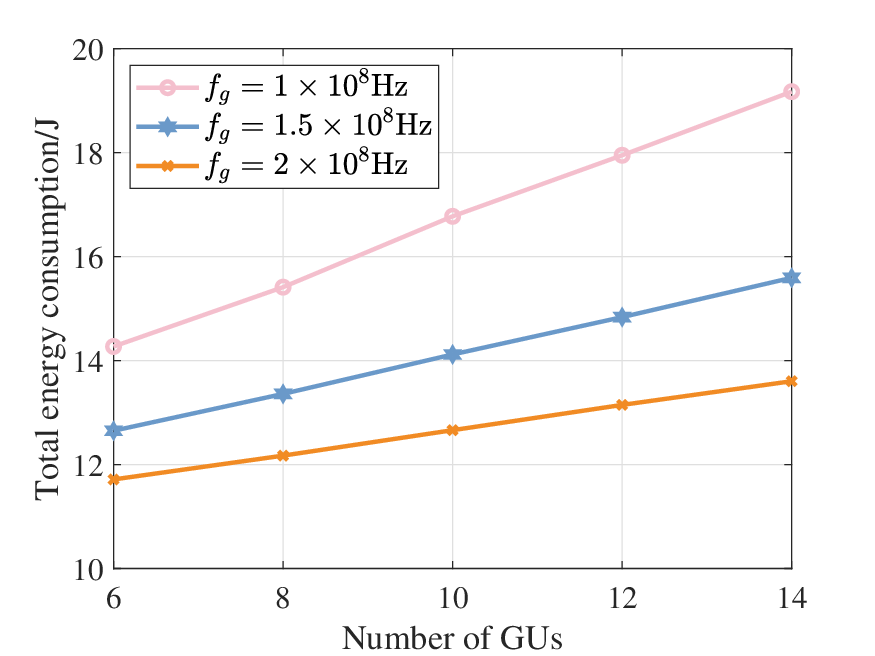}
	}
	\subfloat[Data amount.]{
	\includegraphics[width=0.35\linewidth]{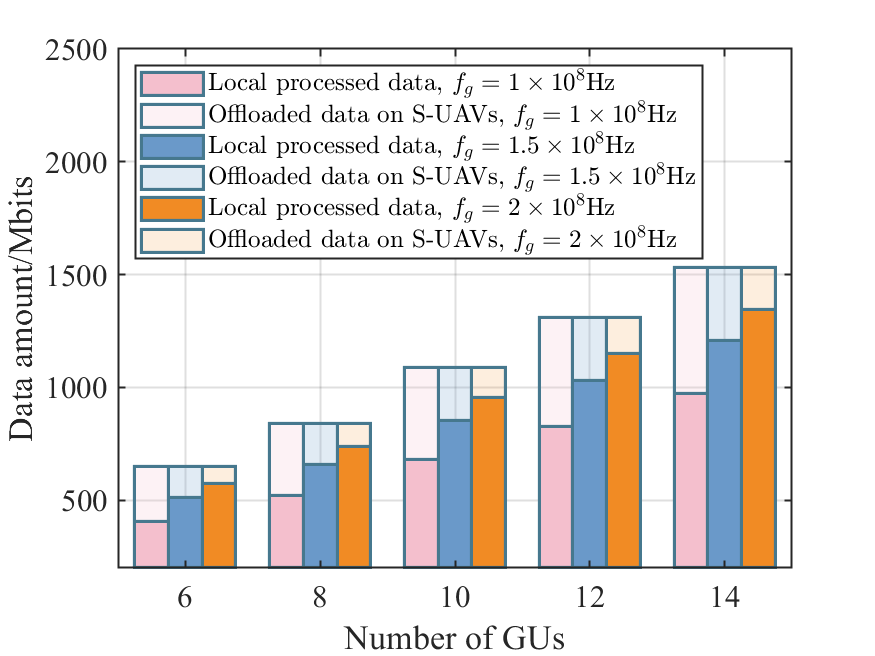}
	}
	\caption{Impact of transmission power and CPU frequency of GUs.}\label{f-6}
\end{figure*}

The effect of the potential task computation complexity estimation error on the system performance is analyzed in Figs. \ref{f-2}(c) and (d). With the increment of the standard deviation of random computation complexity $\sigma_i(t)$, more energy is consumed, as depicted in Fig. \ref{f-2}(c). This is because the uncertain computational complexities introduce more risks to the system and the larger $\sigma_i(t)$ means the uncertainties may vary more widely from the estimated value. To ensure that all tasks can be completed within the required time despite these uncertainties, GUs have to offload more data to the S-UAVs for computation assistance in the conservative manner, as shown in Fig. \ref{f-2}(d). In this way, more transmission energy is consumed, resulting in the increment of total energy cost. This phenomenon emphasizes the importance of ensuring the robustness for aerial MEC system against the random computation complexity estimation in practical applications. Besides, it is observed that the total energy consumption of the system with uncertain computation complexities is consistently higher compared with the ideal circumstances. 

Figs. \ref{f-2}(e) and (f) discuss the impact of the safety factor on the energy consumption. As illustrated in Fig. \ref{f-2}(e), the increment of the safety factor $\alpha$ brings about the rise in total energy consumption. It is due to the fact that a higher safety factor forces GUs to adopt a more conservative strategy for task offloading with more amount of data transmitted to and processed at S-UAVs, as provided in Fig. \ref{f-2}(f). In this way, while more energy is cost, the higher safety factor enhances the reliability and robustness of system against potential uncertainties in the practical applications.
\subsection{Impact of Network Parameters}
Figs. \ref{f-6}(a) and (b) analyze how the transmission power of GUs affects network performance. As depicted in Fig. \ref{f-6}(a), when the GUs offload their data to S-UAVs with higher transmission power, it takes more energy for the whole system as expected. This is because that the increment of GUs transmission results in more transmission energy consumption, as shown in Fig. \ref{f-6}(b). The effect of CPU frequency of GUs are shown in Figs. \ref{f-6}(c) and (d). As shown in Fig. \ref{f-6}(c), the stronger computing capabilities of GUs lead to less energy consumption. This is on account of the fact that GUs can process more data locally with higher efficiency and thus, with the increment of CPU frequency of GUs, more data are processed locally and less data are transmitted to the S-UAVs, as shown in Fig. \ref{f-6}(d). Hence, the transmission energy cost and the total energy cost can be decreased efficiently.

\section{Conclusion}\label{sec6}

In this paper, we focused on the multi-UAV enabled secure offloading system, including S-UAVs serving as aerial BSs and an E-UAV overhearing the communication links. Owing to the limited battery capacities, we further studied the minimization of the energy consumption of both GUs and S-UAVs. Since the unpredictable task computation complexities without distribution information brought up great challenges to the performance of the network, we optimized the trajectories of S-UAVs, connection relationships between GUs and UAVs, as well as offloading ratios under the worst case. Firstly, the DRO method was employed for the conservative solutions considering all possible distributions of the uncertainties. Then, the concept on CVaR was introduced, which successfully transformed the DRCCs into MISOCP constraints. The problem was further decomposed into three sub-problems and the SCA-based trajectory optimization method was proposed. By solving the three sub-problems alternatively, the quasi-optimal solutions were obtained. Finally, the numerical simulations were conduced to verify the robustness against the uncertainties with 2\% more energy consumed compared with the ideal circumstances. In future works, the collaborative operations of various low-altitude aircraft as supplementary edge nodes will be explored to address the coverage limitations of UAVs. Besides, the sophisticated hybrid algorithms are expected to cope with the multiple potential uncertainties in practical environments.
\begin{appendices}
	
\section{}\label{appendix}

For the measurable loss function $\phi(\xi)$, CVaR under safety factor $\alpha$ and probability distribution $\mathbb{P}$ is defined as
\begin{equation}
	\mathbb{P}-CVaR_\alpha(\phi(\xi))=\underset{\beta\in\mathbb{R}}{inf}\left\{\beta+\frac{1}{1-\alpha}\mathbb{E_P}\left\{\phi(\xi)-\beta\right\},0\right\},
\end{equation}
where $\beta$ is an auxiliary variable on $\mathbb{R}$. Hence, the worst-case CVaR for the loss function $\phi(\xi)=\Theta\xi+\theta^0$ is expressed as
\begin{equation}
	\underset{\mathbb{P}\in\mathcal{P}}{sup}\quad\underset{\beta\in\mathbb{R}}{inf}\left\{\beta+\frac{1}{1-\alpha}\mathbb{E_P}\left[\Theta\xi+\theta^0-\beta\right]^+\right\},
\end{equation}
in which $\underset{\mathbb{P}\in \mathcal{P}}{sup}$ represents the upper bound of possible distribution under $\mathcal{P}$, and can be further transformed into $\underset{\beta\in\mathbb{R}}{inf}\left\{\beta+\frac{1}{1-\alpha}\underset{\mathbb{P}\in\mathcal{P}}{sup}\quad\mathbb{E_P}\left[\Theta\xi+\theta^0-\beta\right]^+\right\}$ according to the Saddle Point Theorem. Based on the uncertainty set $\mathcal{P}=\left\{\mathbb{P}\in\mathcal{P}|\mathbb{E_P}(\xi)=\mu_\xi,\mathbb{D_P}(\xi)=\sigma_\xi^{2}\right\}$, the inner layer optimization $\underset{\mathbb{P}\in\mathcal{P}}{sup}\quad\mathbb{E_P}\left[\Theta\xi+\theta^0-\beta\right]^+$ is equivalent to
\begin{equation}\label{e59}
	\begin{split}
		&\underset{\varphi\in\mathcal{C}}{sup}\quad\int_\mathbb{R}\left(\Theta\xi+\theta^0-\beta\right)^+\varphi\,d\varrho\\
		\textrm{s.t.}\quad&\int_\mathbb{R}\varphi\,d\varrho=1,\int_\mathbb{R}\varrho\varphi\,d\varrho=\Theta\mu_\xi,\\
		&\int_\mathbb{R}\varrho^2\varphi\,d\varrho=\Theta^2\mu_\xi^2+\Theta^2\sigma_\xi^2,
	\end{split}
\end{equation}
where $\varrho =\Theta\xi$. $\varphi$ denotes the decision variables and $\mathcal{C}$ represents the cone of nonnegative Borel measures on $\mathbb{R}$. By introducing the dual variables $\chi_1$, $\chi_2$ and $\chi_3$ of the constraints in (\ref{e59}), the Lagrangian dual problem is obtained as
\begin{align}
	\underset{\chi_1,\chi_2,\chi_3}{inf}&\quad \chi_1+\chi_2\Theta\mu_\xi+\chi_3\left(\Theta^2\mu_\xi^2+\Theta^2\sigma_\xi^2\right)\nonumber\\
	\textrm{s.t.}\quad&\chi_1+\chi_2 \varrho+\chi_3\varrho^2\geq0,\label{e60}\\
	&\chi_1+\chi_2\varrho+\chi_3\varrho^2\geq\Theta+\varrho-\beta,\label{e61}\\
	&\chi_3>0.
\end{align}
Then, $\varrho^*=-\frac{\chi_2}{2\chi_3}$ is substituted into (\ref{e60}) and (\ref{e61}). According to the Strong Duality Theorem, (\ref{e60}) and (\ref{e61}) can be equivalently transformed into
\begin{equation}
	\chi_1-\frac{\chi_2^2}{4\chi_3}\geq 0,
\end{equation}
and
\begin{equation}
	\chi_1+\beta-\Theta-\frac{\left(\chi_2-1\right)^2}{4\chi_3}\geq 0.
\end{equation}
With $\chi_1=e+\frac{1}{4z}\left(q-\Theta\mu_\xi\right)^2$, $\chi_2=\frac{q-\Theta\mu_\xi}{2z}$ and $\chi_3=\frac{1}{4z}$, Eq. (\ref{e59}) is reformulated as
\begin{equation}
	\begin{split}
		\underset{\beta,e,q,z,s}{inf}& e+s ,\\
		&e-\theta ^0+\beta +q-\Theta \mu_\xi -z>0,\\
		&e\geq 0,z> 0,\\
		&4zs\geq q^2+\Theta^2\sigma_\xi^2.
	\end{split}
\end{equation}
Consequently, regarding the definition of CVaR, the worst-case CVaR for the loss function $\phi(\xi)=\Theta\xi+\theta^0$ under safety factor $\alpha$ can be approximately transformed into an SOCP problem, i.e.,
\begin{equation}
	\begin{split}
		\underset{\beta,e,q,z,s}{inf}&\beta +\frac{1}{1-\alpha}\left( e+s \right) ,\\
		&e-\theta ^0+\beta +q-\Theta \mu_\xi -z>0,\\
		&e\geq 0,z> 0,\\
		&\begin{Vmatrix}
				q\\
				\Theta\sigma_\xi\\
				z-s
		\end{Vmatrix} \leq z+s.\\
	\end{split}
\end{equation}
\end{appendices}
\bibliographystyle{IEEEtran}
\bibliography{reference.bib}

\begin{thebibliography}{10}
\providecommand{\url}[1]{#1}
\csname url@samestyle\endcsname
\providecommand{\newblock}{\relax}
\providecommand{\bibinfo}[2]{#2}
\providecommand{\BIBentrySTDinterwordspacing}{\spaceskip=0pt\relax}
\providecommand{\BIBentryALTinterwordstretchfactor}{4}
\providecommand{\BIBentryALTinterwordspacing}{\spaceskip=\fontdimen2\font plus
\BIBentryALTinterwordstretchfactor\fontdimen3\font minus
  \fontdimen4\font\relax}
\providecommand{\BIBforeignlanguage}[2]{{%
\expandafter\ifx\csname l@#1\endcsname\relax
\typeout{** WARNING: IEEEtran.bst: No hyphenation pattern has been}%
\typeout{** loaded for the language `#1'. Using the pattern for}%
\typeout{** the default language instead.}%
\else
\language=\csname l@#1\endcsname
\fi
#2}}
\providecommand{\BIBdecl}{\relax}
\BIBdecl

\bibitem{GDSG-Liang}
R.~Liang, B.~Yang, P.~Chen, X.~Cao, Z.~Yu, M.~Debbah, D.~Niyato, H.~V. Poor,
  and C.~Yuen, ``{GDSG:} graph diffusion-based solution generator for
  optimization problems in {MEC} networks,'' \emph{IEEE Trans. Mob. Comput.},
  vol.~24, no.~10, pp. 10\,264--10\,277, Oct. 2025.

\bibitem{Joint-Hoang}
T.~H. Hoang, C.~T. Nguyen, T.~N. Do, and G.~Kaddoum, ``Joint task offloading
  and radio resource management in stochastic {MEC} systems,'' \emph{IEEE
  Trans. on Commun.}, vol.~72, no.~5, pp. 2670--2686, May 2024.

\bibitem{Computation-Yang}
B.~Yang, X.~Cao, J.~Bassey, X.~Li, and L.~Qian, ``Computation offloading in
  multi-access edge computing{: A} multi-task learning approach,'' \emph{IEEE
  Trans. Mob. Comput.}, vol.~20, no.~9, pp. 2745--2762, Sep. 2021.

\bibitem{Toward-Bai}
Y.~Bai, H.~Zhao, X.~Zhang, Z.~Chang, R.~Jäntti, and K.~Yang, ``Toward
  autonomous multi-{UAV} wireless network: A survey of reinforcement
  learning-based approaches,'' \emph{IEEE Commun. Surv. Tuts.}, vol.~25, no.~4,
  pp. 3038--3067, 4th Quart. 2023.

\bibitem{Energy-Pervez}
F.~Pervez, A.~Sultana, C.~Yang, and L.~Zhao, ``Energy and latency efficient
  joint communication and computation optimization in a multi-{UAV}-assisted
  {MEC} network,'' \emph{IEEE Trans. Wirel. Commun.}, vol.~23, no.~3, pp.
  1728--1741, Mar. 2024.

\bibitem{Cooperative-Jia}
Z.~Jia, J.~You, C.~Dong, Q.~Wu, F.~Zhou, D.~Niyato, and Z.~Han, ``Cooperative
  cognitive dynamic system in {UAV} swarms{: R}econfigurable mechanism and
  framework,'' \emph{IEEE Veh. Technol. Mag.}, vol.~19, no.~3, pp. 90--101,
  Sep. 2024.

\bibitem{Joint-Dong}
C.~Dong, Y.~Liao, Z.~Jia, Q.~Wu, and L.~Zhang, ``Joint {ADS-B} in {B5G} for
  hierarchical {UAV} networks{: P}erformance analysis and {MEC} based
  optimization,'' \emph{IEEE Internet Things J.}, vol.~12, no.~12, pp.
  22\,211--22\,223, Mar. 2025.

\bibitem{Adaptive-Wu}
Y.~Wu, Z.~Jia, Q.~Wu, and Z.~Lu, ``Adaptive {QoE}-aware {SFC} orchestration in
  uav networks{: A} deep reinforcement learning approach,'' \emph{IEEE Trans.
  Netw. Sci. Eng.}, vol.~11, no.~6, pp. 6052--6065, Nov. 2024.

\bibitem{Computing-Wang}
C.~Wang, Y.~Han, L.~Zhang, Z.~Jia, H.~Zhang, C.~S. Hong, and Z.~Han,
  ``Computing power in the sky{: D}igital twin-assisted collaborative computing
  with multi-{UAV} networks,'' \emph{IEEE Trans. Veh. Technol.}, 2025, early
  access.

\bibitem{Learning-Based-Liu}
S.~Liu, H.~Yang, L.~Xiao, M.~Zheng, H.~Lu, and Z.~Xiong, ``Learning-based
  resource management optimization for {UAV}-assisted {MEC} against jamming,''
  \emph{IEEE Trans. on Commun.}, vol.~72, no.~8, pp. 4873--4886, Aug. 2024.

\bibitem{Joint-Li}
Y.~Li, Y.~Fang, and L.~Qiu, ``Joint computation offloading and communication
  design for secure {UAV}-enabled {MEC} systems,'' in \emph{IEEE Wireless
  Communications and Networking Conference (WCNC)}, Nanjing, China, Mar. 2021.

\bibitem{Joint-Xu}
Y.~Xu, T.~Zhang, D.~Yang, Y.~Liu, and M.~Tao, ``Joint resource and trajectory
  optimization for security in {UAV}-assisted {MEC} systems,'' \emph{IEEE
  Trans. on Commun.}, vol.~69, no.~1, pp. 573--588, Jan. 2021.

\bibitem{Energy-Hua}
M.~Hua, Y.~Wang, Q.~Wu, H.~Dai, Y.~Huang, and L.~Yang, ``Energy-efficient
  cooperative secure transmission in multi-{UAV}-enabled wireless networks,''
  \emph{IEEE Trans. Veh. Technol.}, vol.~68, no.~8, pp. 7761--7775, Aug. 2019.

\bibitem{Robust-Fan}
R.~Fan, B.~Liang, S.~Zuo, H.~Hu, H.~Jiang, and N.~Zhang, ``Robust task
  offloading and resource allocation in mobile edge computing with uncertain
  distribution of computation burden,'' \emph{IEEE Trans. on Commun.}, vol.~71,
  no.~7, pp. 4283--4299, Jul. 2023.

\bibitem{Aiding-Yang}
H.~Yang, R.~Ruby, Q.-V. Pham, and K.~Wu, ``Aiding a disaster spot via
  multi-{UAV}-based {IoT} networks: Energy and mission completion time-aware
  trajectory optimization,'' \emph{IEEE Internet Things J.}, vol.~9, no.~8, pp.
  5853--5867, Apr. 2022.

\bibitem{Frontiers-Yang}
B.~Yang, R.~Liang, W.~Li, H.~Wang, X.~Cao, Z.~Yu, S.~Lasaulce, M.~Debbah, M.-S.
  Alouini, H.~V. Poor, and C.~Yuen, ``Frontiers of generative {AI} for network
  optimization: Theories, limits, and visions,'' \emph{arXiv e-prints
  arXiv:2507.01773}, 2025.

\bibitem{Stackelberg-Wang}
M.~Wang, L.~Zhang, P.~Gao, X.~Yang, K.~Wang, and K.~Yang,
  ``Stackelberg-game-based intelligent offloading incentive mechanism for a
  multi-{UAV}-assisted mobile-edge computing system,'' \emph{IEEE Internet
  Things J.}, vol.~10, no.~17, pp. 15\,679--15\,689, Sep. 2023.

\bibitem{Multiagent-Lee}
W.~Lee and T.~Kim, ``Multiagent reinforcement learning in controlling
  offloading ratio and trajectory for multi-{UAV} mobile-edge computing,''
  \emph{IEEE Internet Things J.}, vol.~11, no.~2, pp. 3417--3429, Jan. 2024.

\bibitem{Service-Zheng}
G.~Zheng, C.~Xu, M.~Wen, and X.~Zhao, ``Service caching based aerial
  cooperative computing and resource allocation in multi-{UAV} enabled {MEC}
  systems,'' \emph{IEEE Trans. Veh. Technol.}, vol.~71, no.~10, pp.
  10\,934--10\,947, Oct. 2022.

\bibitem{Cooperative-Liu}
Y.~Liu, S.~Xie, and Y.~Zhang, ``Cooperative offloading and resource management
  for {UAV}-enabled mobile edge computing in power {IoT} system,'' \emph{IEEE
  Trans. Veh. Technol.}, vol.~69, no.~10, pp. 12\,229--12\,239, Oct. 2020.

\bibitem{Joint-Liu}
W.~Liu, H.~Wang, X.~Zhang, H.~Xing, J.~Ren, Y.~Shen, and S.~Cui, ``Joint
  trajectory design and resource allocation in {UAV}-enabled heterogeneous
  {MEC} systems,'' \emph{IEEE Internet Things J.}, vol.~11, no.~19, pp.
  30\,817--30\,832, Oct. 2024.

\bibitem{Fairness-He}
Y.~He, Y.~Gan, H.~Cui, and M.~Guizani, ``Fairness-based {3-D} multi-{UAV}
  trajectory optimization in multi-{UAV}-assisted {MEC} system,'' \emph{IEEE
  Internet Things J.}, vol.~10, no.~13, pp. 11\,383--11\,395, Jul. 2023.

\bibitem{Adaptive-Chen}
Y.~Chen, M.~Liu, B.~Ai, Y.~Wang, and S.~Sun, ``Adaptive bitrate video caching
  in {UAV}-assisted {MEC} networks based on distributionally robust
  optimization,'' \emph{IEEE Trans. Mob. Comput.}, vol.~23, no.~5, pp.
  5245--5259, May 2024.

\bibitem{Robust-Li}
B.~Li, R.~Yang, L.~Liu, J.~Wang, N.~Zhang, and M.~Dong, ``Robust computation
  offloading and trajectory optimization for multi-{UAV}-assisted {MEC: A}
  multiagent {DRL} approach,'' \emph{IEEE Internet Things J.}, vol.~11, no.~3,
  pp. 4775--4786, Feb. 2024.

\bibitem{Resource-Lu}
F.~Lu, G.~Liu, W.~Lu, Y.~Gao, J.~Cao, N.~Zhao, and A.~Nallanathan, ``Resource
  and trajectory optimization for {UAV}-relay-assisted secure maritime {MEC},''
  \emph{IEEE Trans. on Commun.}, vol.~72, no.~3, pp. 1641--1652, Mar. 2024.

\bibitem{DDQN-Ding}
Y.~Ding, H.~Han, W.~Lu, Y.~Wang, N.~Zhao, X.~Wang, and X.~Yang, ``{DDQN}-based
  trajectory and resource optimization for {UAV}-aided {MEC} secure
  communications,'' \emph{IEEE Trans. Veh. Technol.}, vol.~73, no.~4, pp.
  6006--6011, Apr. 2024.

\bibitem{Task-Zhang}
Y.~Zhang, Z.~Kuang, Y.~Feng, and F.~Hou, ``Task offloading and trajectory
  optimization for secure communications in dynamic user multi-{UAV} {MEC}
  systems,'' \emph{IEEE Trans. Mob. Comput.}, vol.~23, no.~12, pp.
  14\,427--14\,440, Dec. 2024.

\bibitem{Collaborative-Ding}
Y.~Ding, Q.~Zhang, W.~Lu, N.~Zhao, A.~Nallanathan, X.~Wang, and X.~Yang,
  ``Collaborative communication and computation for secure {UAV}-enabled {MEC}
  against active aerial eavesdropping,'' \emph{IEEE Trans. Wirel. Commun.},
  vol.~23, no.~11, pp. 15\,915--15\,929, Nov. 2024.

\bibitem{Secure-Zhao}
T.~Zhao, F.~Li, and L.~He, ``Secure video offloading in multi-{UAV}-enabled
  {MEC} networks: {A} deep reinforcement learning approach,'' \emph{IEEE
  Internet Things J.}, vol.~11, no.~2, pp. 2950--2963, Jan. 2024.

\bibitem{Secure-Li}
X.~Li, W.~Huangfu, X.~Xu, J.~Huo, and K.~Long, ``Secure offloading with
  adversarial multi-agent reinforcement learning against intelligent
  eavesdroppers in {UAV}-enabled mobile edge computing,'' \emph{IEEE Trans.
  Mob. Comput.}, vol.~23, no.~12, pp. 13\,914--13\,928, Dec. 2024.

\bibitem{Online-Ding}
Y.~Ding, Y.~Feng, W.~Lu, S.~Zheng, N.~Zhao, L.~Meng, A.~Nallanathan, and
  X.~Yang, ``Online edge learning offloading and resource management for
  {UAV}-assisted {MEC} secure communications,'' \emph{IEEE J. Sel. Topics
  Signal Process}, vol.~17, no.~1, pp. 54--65, Jan. 2023.

\bibitem{Device-Masoudi}
M.~Masoudi and C.~Cavdar, ``Device vs edge computing for mobile services:
  {D}elay-aware decision making to minimize power consumption,'' \emph{IEEE
  Trans. Mob. Comput.}, vol.~20, no.~12, pp. 3324--3337, Dec. 2021.

\bibitem{Asynchronous-Liang}
B.~Liang, R.~Fan, and X.~Bu, ``Asynchronous task offloading in mobile edge
  computing with uncertain computation burden over multiple channels,'' in
  \emph{IEEE 97th Vehicular Technology Conference (VTC2023-Spring)}, Florence,
  Italy, Jun. 2023.

\bibitem{Computational-Khalid}
R.~Khalid, Z.~Shah, M.~Naeem, A.~Ali, A.~Al-Fuqaha, and W.~Ejaz,
  ``Computational efficiency maximization for {UAV}-assisted {MEC} networks
  with energy harvesting in disaster scenarios,'' \emph{IEEE Internet Things
  J.}, vol.~11, no.~5, pp. 9004--9018, Mar. 2024.

\bibitem{Energy-Zeng}
Y.~Zeng, J.~Xu, and R.~Zhang, ``Energy minimization for wireless communication
  with rotary-wing {UAV},'' \emph{IEEE Trans. Wirel. Commun.}, vol.~18, no.~4,
  pp. 2329--2345, Apr. 2019.

\bibitem{Dynamic-Xu}
Y.~Xu, J.~An, C.~Zhou, H.~Xu, and Z.~Han, ``Dynamic energy management for
  {UAV}-enabled {VR} systems: {A} tile-based collaboration approach,''
  \emph{IEEE Trans. Veh. Technol.}, vol.~73, no.~11, pp. 17\,668--17\,683, Nov.
  2024.

\bibitem{Partial-Niu}
Z.~Niu, H.~Liu, J.~Du, and Y.~Ge, ``Partial task offloading for {AAV}-assisted
  mobile edge computing with energy harvesting,'' \emph{IEEE Internet Things
  J.}, vol.~12, no.~9, pp. 12\,777--12\,791, May 2025.

\bibitem{Service-Jia}
Z.~Jia, Y.~Cao, L.~He, Q.~Wu, Q.~Zhu, D.~Niyato, and Z.~Han, ``Service function
  chain dynamic scheduling in space-air-ground integrated networks,''
  \emph{IEEE Trans. Veh. Technol.}, vol.~74, no.~7, pp. 11\,235--11\,248, Jul.
  2025.

\bibitem{Enhancing-Chen}
J.~Chen, P.~Yang, S.~Ren, Z.~Zhao, X.~Cao, and D.~Wu, ``Enhancing aiot device
  association with task offloading in aerial {MEC} networks,'' \emph{IEEE
  Internet Things J.}, vol.~11, no.~1, pp. 174--187, Jan. 2024.

\bibitem{Secure-Zhou}
Y.~Zhou, C.~Pan, P.~L. Yeoh, K.~Wang, M.~Elkashlan, B.~Vucetic, and Y.~Li,
  ``Secure communications for {UAV}-enabled mobile edge computing systems,''
  \emph{IEEE Trans. on Commu.}, vol.~68, no.~1, pp. 376--388, Jan. 2020.

\bibitem{Joint-You}
J.~You, Z.~Jia, C.~Dong, Q.~Wu, and Z.~Han, ``Joint computation offloading and
  resource allocation for uncertain maritime {MEC} via cooperation of {UAVs}
  and vessels,'' \emph{IEEE Trans. Veh. Technol.}, 2025, early acess.

\bibitem{Distributionally-Jia}
Z.~Jia, C.~Cui, C.~Dong, Q.~Wu, Z.~Ling, D.~Niyato, and Z.~Han,
  ``Distributionally robust optimization for aerial multi-access edge computing
  via cooperation of {UAVs} and {HAPs},'' \emph{IEEE Trans. Mob. Comput.},
  2025, early access.

\bibitem{Distributionally-Ling}
Z.~Ling, F.~Hu, Y.~Zhang, L.~Fan, F.~Gao, and Z.~Han, ``Distributionally robust
  chance-constrained backscatter communication-assisted computation offloading
  in {WBANs},'' \emph{IEEE Trans. on Commun.}, vol.~69, no.~5, pp. 3395--3408,
  May 2021.

\bibitem{Distributionally-Cui}
C.~Cui, Z.~Jia, C.~Dong, Z.~Ling, J.~You, and Q.~Wu, ``Distributionally robust
  chance-constrained optimization for hierarchical {UAV}-based {MEC},'' in
  \emph{IEEE Conference on Computer Communications Workshops (INFOCOM WKSHPS)},
  Hoboken, NJ, May 2023.

\bibitem{Distributionally-Zymler}
S.~Zymler and B.~Kuhn, Daniel~Rustem, ``Distributionally robust joint chance
  constraints with second-order moment information,'' \emph{Math Program}, vol.
  137, no. 1-2, pp. 167--198, Feb. 2011.

\bibitem{Value-Sarykalin}
S.~Sarykalin, G.~Serraino, and S.~Uryasev, \emph{Value-at-Risk vs. Conditional
  Value-at-Risk in Risk Management and Optimization}, ch. Chapter 13, pp.
  270--294.

\bibitem{Optimization-Tyrrell}
R.~T. Rockafellar and S.~Uryasev, ``Optimization of conditional
  value-at-risk,'' \emph{J RISK}, vol.~2, no.~3, pp. 21--41, Sep. 2000.

\bibitem{Distributionally-Ding}
K.-W. Ding, M.-h. Wang, and N.~Huang, ``Distributionally robust chance
  constrained problem under interval distribution information,'' \emph{Optim.
  Lett.}, vol.~12, pp. 1862--4472, Aug. 2018.

\bibitem{MEC-Zeng}
Y.~Zeng and J.~Tang, ``{MEC}-assisted real-time data acquisition and processing
  for {UAV} with general missions,'' \emph{IEEE Trans. Veh. Technol.}, vol.~72,
  no.~1, pp. 1058--1072, Jan. 2023.

\bibitem{Delay-Li}
Z.~Li, M.~Chen, J.~Chen, J.~Zhao, Y.~Wang, Y.~Hu, and Z.~Yang, ``Delay
  efficient caching enabled hierarchical mobile edge computing networks,''
  \emph{IEEE Trans. on Commun.}, 2025, early access.

\bibitem{Computation-Zhang}
J.~Zhang, L.~Zhou, F.~Zhou, B.-C. Seet, H.~Zhang, Z.~Cai, and J.~Wei,
  ``Computation-efficient offloading and trajectory scheduling for multi-{UAV}
  assisted mobile edge computing,'' \emph{IEEE Trans. Veh. Technol.}, vol.~69,
  no.~2, pp. 2114--2125, Feb. 2020.

\end{thebibliography}

\begin{IEEEbiography}[{\includegraphics[width=1in,height=1.25in,clip,keepaspectratio]{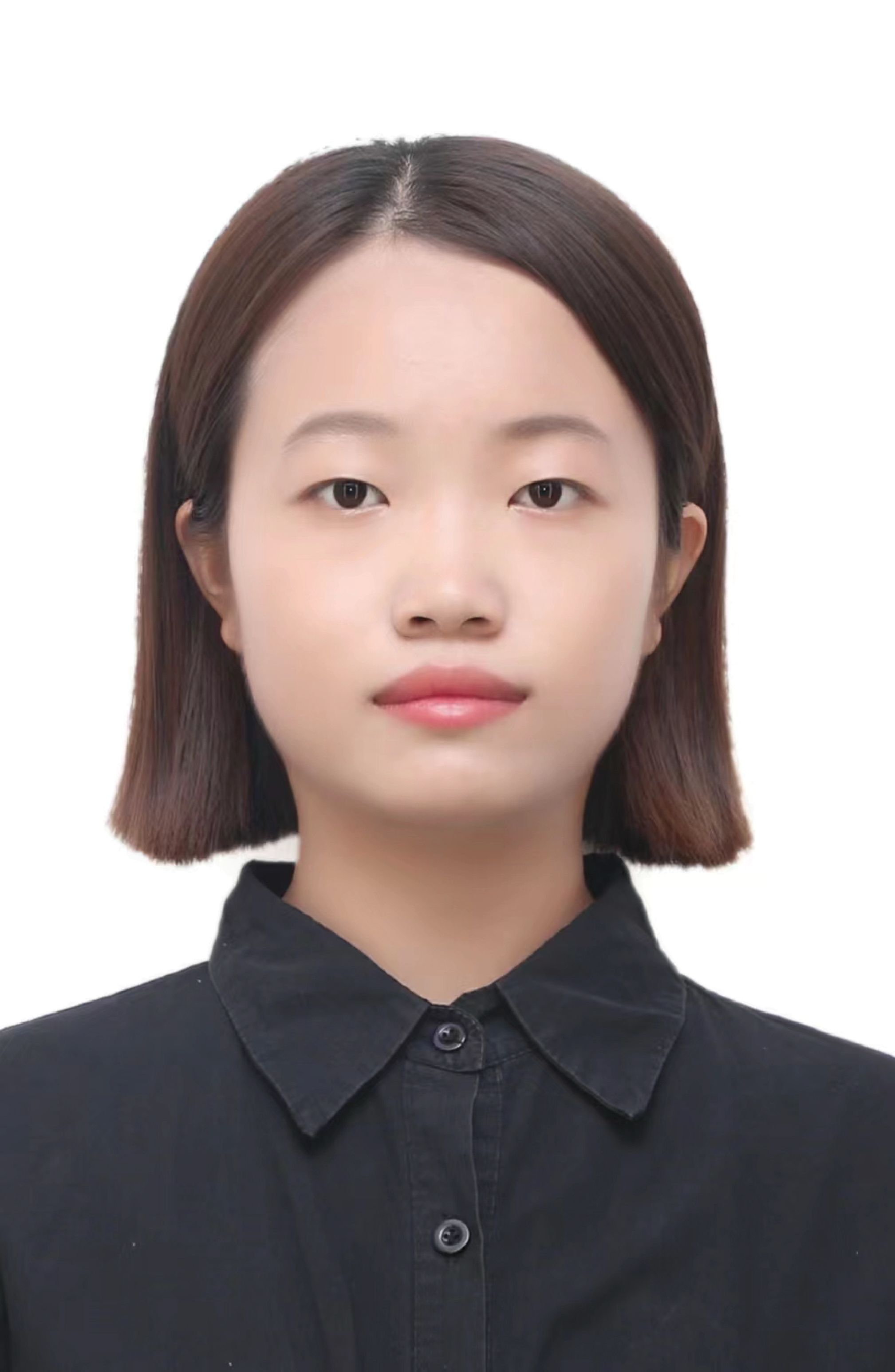}}] {Can Cui} is a postgraduate student with the College of Electronic and Information Engineering, Nanjing University of Aeronautics and Astronautics, Nanjing, China. Her current research interests include convex optimization and its applications in computation offloading and resource allocation, edge computing, and low-altitude intelligent networks.
\end{IEEEbiography}

\begin{IEEEbiography}[{\includegraphics[width=1in,height=1.25in,clip,keepaspectratio]{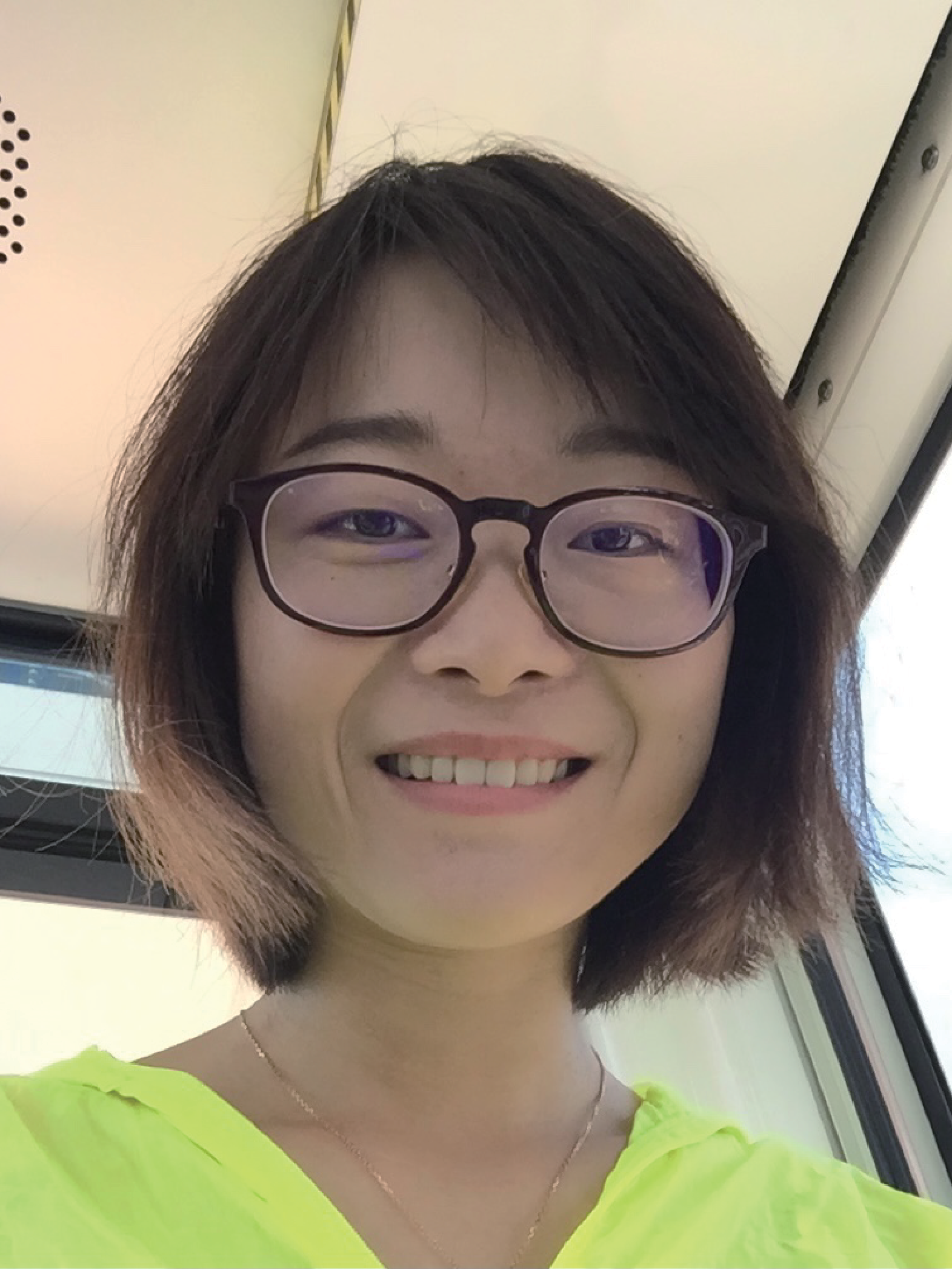}}] {Ziye Jia} (Member, IEEE) received the B.E., M.S., and Ph.D. degrees in communication and information systems from Xidian University, Xi'an, China, in 2012, 2015, and 2021, respectively. From 2018 to 2020, she was a Visiting Ph.D. Student with the Department of Electrical and Computer Engineering, University of Houston. She is currently an Associate Professor with the Key Laboratory of Dynamic Cognitive System of Electromagnetic Spectrum Space, Ministry of Industry and Information Technology, Nanjing University of Aeronautics and Astronautics, Nanjing, China. Her current research interests include space-air-ground networks, aerial access networks, UAV networking, resource optimization,  machine learning, etc.
\end{IEEEbiography}

\begin{IEEEbiography}[{\includegraphics[width=1in,height=1.25in,clip,keepaspectratio]{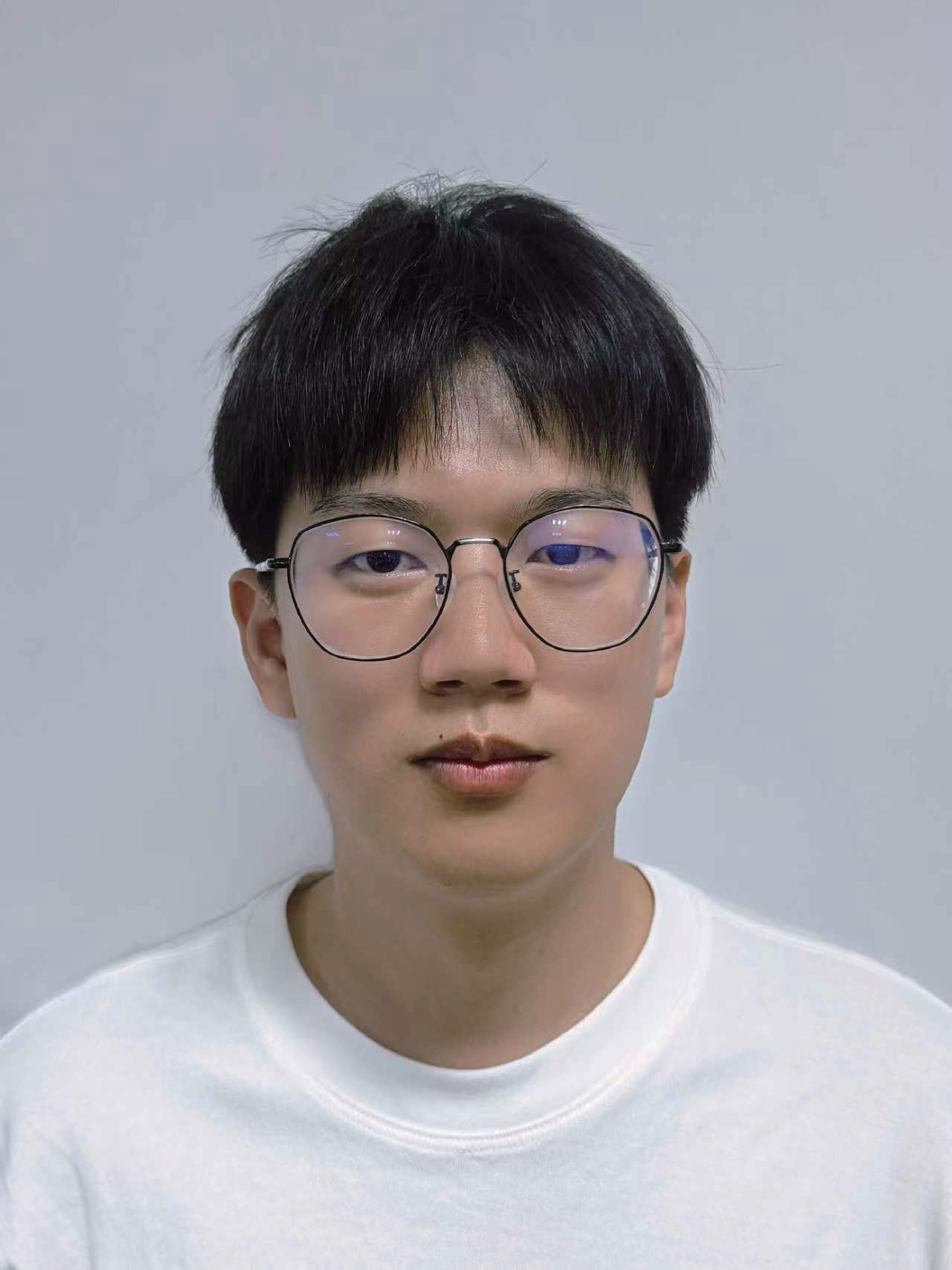}}] {Jiahao You} is currently pursuing the Ph.D. degree with the School of Electronic and Information Engineering, Nanjing University of Aeronautics and Astronautics. His current research interests include deep reinforcement learning and its applications in computation offloading and resource allocation, edge computing, low-altitude intelligent networks, and UAV trajectory planning.
\end{IEEEbiography}

\begin{IEEEbiography}[{\includegraphics[width=1in,height=1.25in,clip,keepaspectratio]{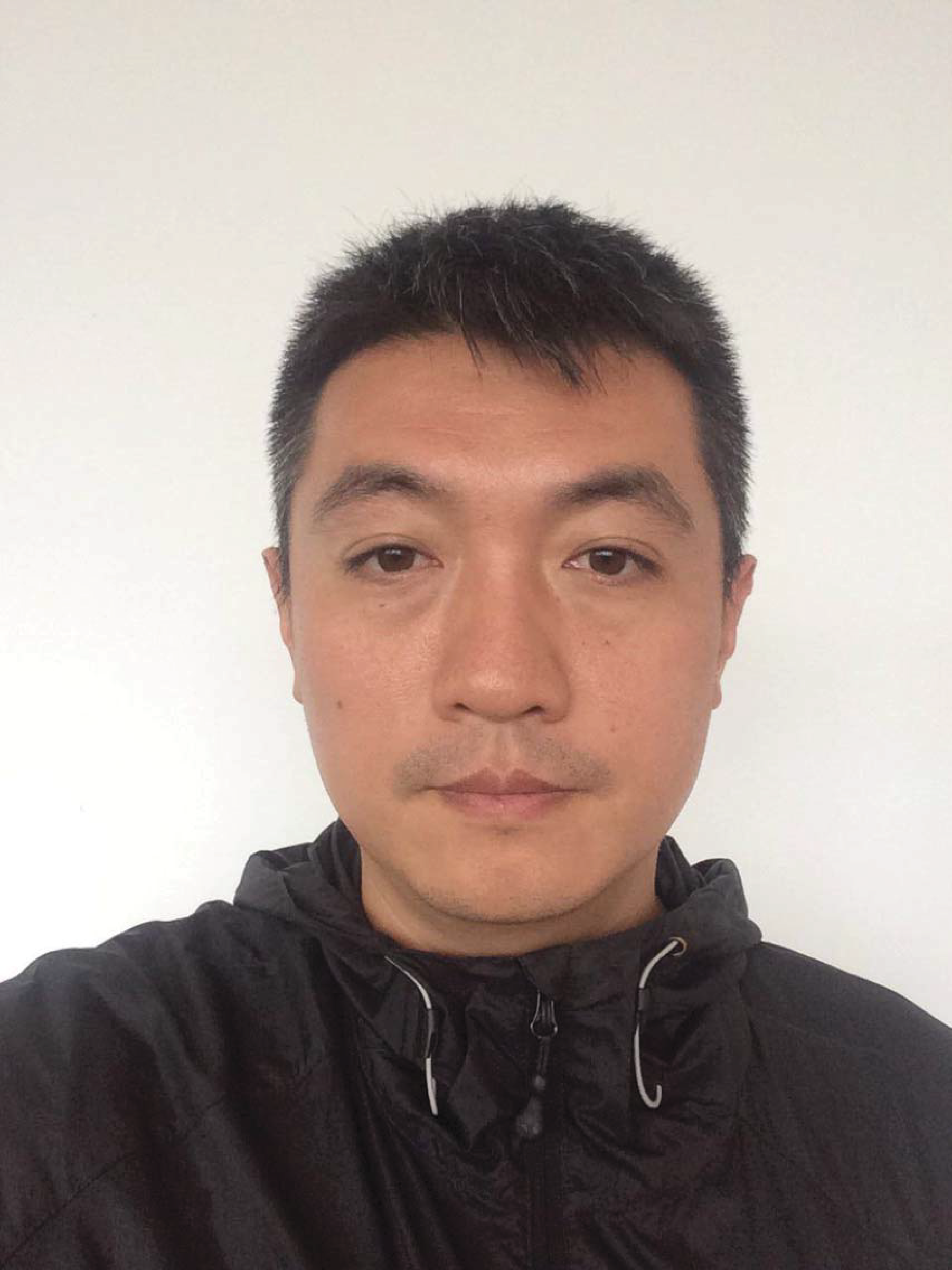}}]{Chao Dong} (Senior Member, IEEE) received his Ph.D degree in Communication Engineering from PLA University of Science and Technology, China, in 2007. He is now a full professor with College of Electronic and Information Engineering, Nanjing University of Aeronautics and Astronautics, China. His current research interests include D2D communications, UAVs swarm networking and anti-jamming network protocol.
\end{IEEEbiography}

\begin{IEEEbiography}[{\includegraphics[width=1in,height=1.25in,clip,keepaspectratio]{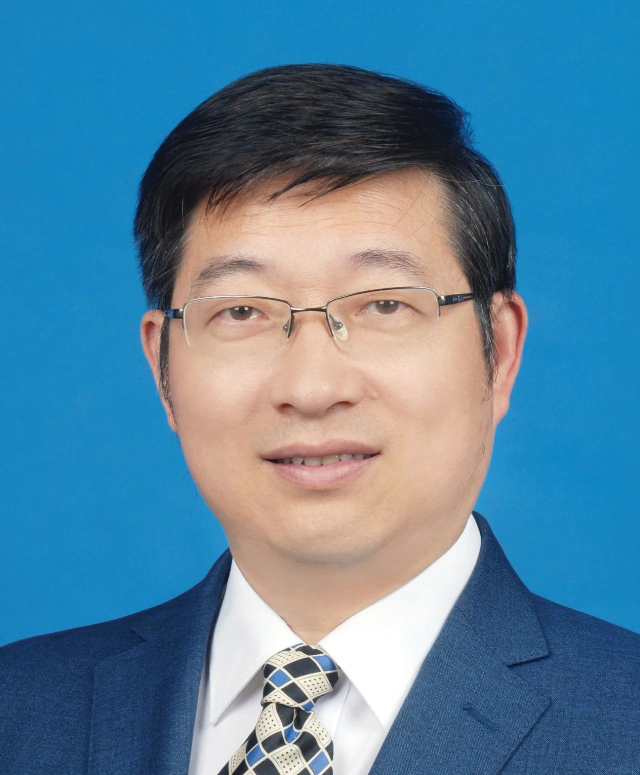}}] {Qihui Wu} (Fellow, IEEE) received the B.S. degree in communications engineering and the M.S. and Ph.D. degrees in communications and information systems from the Institute of Communications Engineering, Nanjing, China, in 1994, 1997, and 2000, respectively. From 2003 to 2005, he was a Post-Doctoral Research Associate with Southeast University, Nanjing. From 2005 to 2007, he was an Associate Professor with the College of Communications Engineering, PLA University of Science and Technology, Nanjing, where he was a Full Professor, from 2008 to 2016. From March 2011 to September 2011, he was an Advanced Visiting Scholar with the Stevens Institute of Technology, Hoboken, NJ, USA. Since May 2016, he has been a Full Professor with the College of Electronic and Information Engineering, Nanjing University of Aeronautics and Astronautics, Nanjing. His current research interests include wireless communications and statistical signal processing, with an emphasis on system design of software defined radio, cognitive radio, and smart radio.
\end{IEEEbiography}

\begin{IEEEbiography}[{\includegraphics[width=1in,height=1.25in,clip,keepaspectratio]{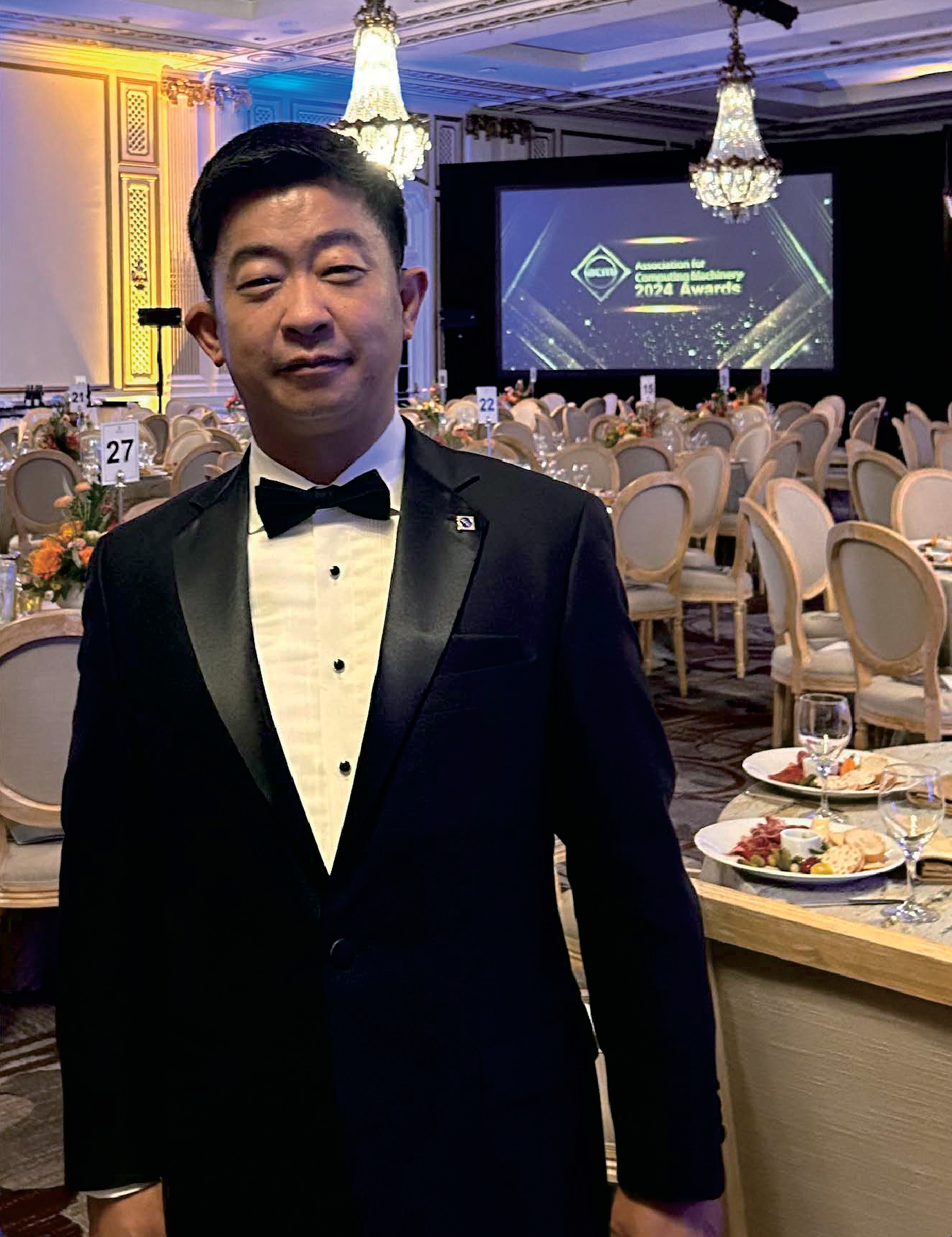}}] {Zhu Han} (Fellow, IEEE) (S'01-M'04-SM'09-F'14) received the B.S. degree in electronic engineering from Tsinghua University, in 1997, and the M.S. and Ph.D. degrees in electrical and computer engineering from the University of Maryland, College Park, in 1999 and 2003, respectively. From 2000 to 2002, he was an R\&D Engineer of JDSU, Germantown, Maryland. From 2003 to 2006, he was a Research Associate at the University of Maryland. From 2006 to 2008, he was an assistant professor at Boise State University, Idaho. Currently, he is a John and Rebecca Moores Professor in the Electrical and Computer Engineering Department as well as in the Computer Science Department at the University of Houston, Texas. Dr. Han's main research targets on the novel game-theory related concepts critical to enabling efficient and distributive use of wireless networks with limited resources. His other research interests include wireless resource allocation and management, wireless communications and networking, quantum computing, data science, smart grid, carbon neutralization, security and privacy.  Dr. Han received an NSF Career Award in 2010, the Fred W. Ellersick Prize of the IEEE Communication Society in 2011, the EURASIP Best Paper Award for the Journal on Advances in Signal Processing in 2015, IEEE Leonard G. Abraham Prize in the field of Communications Systems (best paper award in IEEE JSAC) in 2016, IEEE Vehicular Technology Society 2022 Best Land Transportation Paper Award, and several best paper awards in IEEE conferences. Dr. Han was an IEEE Communications Society Distinguished Lecturer from 2015 to 2018 and ACM Distinguished Speaker from 2022 to 2025, AAAS fellow since 2019, and ACM Fellow since 2024. Dr. Han is a 1\% highly cited researcher since 2017 according to Web of Science. Dr. Han is also the winner of the 2021 IEEE Kiyo Tomiyasu Award (an IEEE Field Award), for outstanding early to mid-career contributions to technologies holding the promise of innovative applications, with the following citation: ``for contributions to game theory and distributed management of autonomous communication networks."
\end{IEEEbiography}

\end{document}